\newtheorem{theorem}{Theorem}
\newtheorem{remark}{Remark}
\title{\textbf{Mathematical Framework for Epidemic Dynamics: Optimal Control and Global Sensitivity Analysis}}
\author{Liban Ismail$^{1,*}$, Yahyeh Souleiman$^{1}$, Saralees Nadarajah$^{2}$ and Abdisalam Hassan$^{3}$}
\date{}
\begin{document}
\maketitle
\noindent $^1$ \textit{Laboratory of Analysis, Modeling, and Optimization (LAMO) at the Center for Research in Numerical Mathematics (CRMN), University of Djibouti, Balbala Campus, Djibouti}\\ 

\noindent $^2$ \textit{University of Manchester, Manchester, United Kingdom}\\

\noindent $^3$ \textit{Research and Innovation Centre, Amoud University, Amoud Valley, Borama, 25263, Somalia}
\\
\\
\textbf{$^*$ Correspondence:}{ email: liban\_ismail\_abdillahi@univ.edu.dj }
\begin{abstract}
\noindent This study develops and analyzes an extended Susceptible–Infected–Hospitalized–Recovered (SIHR) model incorporating time-dependent control functions to capture preventive measures (e.g., distancing, mask use) and resource-limited therapeutic interventions. This formulation provides a realistic mathematical framework for modeling public health responses beyond classical uncontrolled epidemic models. The control design is integrated into the model via an optimal control framework, solved numerically using the Forward–Backward Sweep method, enabling the exploration of intervention strategies on epidemic dynamics, including infection prevalence, hospitalization burden, and the effective reproduction number.  To assess the robustness of these strategies under uncertainty, we employ Polynomial Chaos Expansion combined with Sobol’ sensitivity indices, quantifying the influence of key epidemiological parameters (transmission, recovery, hospitalization rates) on model outcomes. Numerical simulations, calibrated to Djiboutian COVID-19 data, show that combined preventive and therapeutic interventions substantially mitigate epidemic burden, though their effectiveness depends critically on transmission-related uncertainties.  The originality of this work lies in combining optimal control theory with global sensitivity analysis, thus bridging numerical methods, optimization, and epidemic modeling. This integrated approach offers a general mathematical framework for designing and evaluating control strategies in infectious disease outbreaks, with applications to low-resource settings and beyond.
\end{abstract}

%%%%%%%%%%%%%%%%%%%%%%%%%%%%%%%%%%
\noindent
{\bf Keywords}: Mathematical epidemiology; SIHR model; Optimal control; Forward–Backward Sweep method; Polynomial Chaos Expansion; Sobol’ sensitivity analysis; Epidemic control strategies.
\\
%%%%%%%%%%%%%%%%%%%%%%%%%%%%%%%%%%
%\noindent {\bf 2010 Mathematics Subject Classification}:
\newpage 
\section{Introduction}\label{Introduction}

\noindent Infectious diseases remain one of the most pressing challenges to global health systems, particularly in low- and middle-income countries where limited healthcare infrastructure magnifies the impacts of emerging epidemics \cite{WHO2022}. The coronavirus disease 2019 (COVID-19) pandemic, caused by the novel severe acute respiratory syndrome coronavirus 2 (SARS-CoV-2), has highlighted not only the vulnerability of health systems but also the urgent need for rigorous predictive tools to support timely and efficient interventions \cite{massard2022}. Mathematical modeling has long played a central role in understanding disease transmission dynamics, forecasting epidemic trajectories, and informing public health policies during both endemic and epidemic contexts \cite{CastilloChavez2002,brauer2017}. Classical compartmental models such as SIR, SIRS, and SIS provide a structured representation of the flow of individuals through susceptible, infected, and recovered states \cite{korobeinikov2004lyapunov}. While relatively simple, these models capture fundamental mechanisms of epidemic spread and form the basis for more elaborate frameworks that integrate biological, demographic, and epidemiological complexities \cite{hethcote2000}. One such extension is the SIHR (Susceptible–Infected–Hospitalized–Recovered) model, which explicitly accounts for hospitalization dynamics and healthcare system constraints \cite{djellout2023}. The SIHR framework offers a more realistic description of disease burden by distinguishing between non-hospitalized and hospitalized infected individuals, thereby reflecting the pressure on healthcare resources and the importance of timely interventions to prevent system collapse.\\

\noindent At the global scale, especially in Africa where fragile health infrastructures often face overlapping epidemics, the development of robust epidemiological models has played a central role in understanding transmission dynamics and informing intervention strategies. Foundational contributions, such as Lyapunov-based stability analyses for SIR-type models \cite{korobeinikov2004lyapunov} and the systematic use of optimal control methods \cite{lenhart2007optimal}, have paved the way for applied studies on diverse pathogens. For instance, optimal control approaches have been successfully applied to malaria \cite{okosun2013optimal}, influenza \cite{huang2010optimal}, and tuberculosis \cite{agaba2018optimal}, showing their potential to balance prevention and treatment strategies under resource constraints. Similarly, fractional-order and temperature-dependent models have been used to capture the dynamics of COVID-19 and assess intervention efficiency under more realistic assumptions \cite{tesfay2021temperature}.\\

\noindent At the regional level, the Horn of Africa remains particularly vulnerable due to recurrent malaria transmission and limited healthcare capacity. In Djibouti, several studies have documented the persistence of \textit{Plasmodium falciparum} despite elimination efforts \cite{bouh2013}, as well as recent molecular investigations that confirm the continuing circulation of both \textit{P. falciparum} and \textit{P. vivax} \cite{moussa2023}. Beyond malaria, mathematical modeling has also been employed to analyze the spread of COVID-19, with Souleiman et al. \cite{souleiman2021} introducing an SIHR framework tailored to Djibouti to describe epidemic dynamics. More recently, malaria-specific models adapted to Djiboutian data have been proposed to capture the coexistence of different parasite species and intervention impacts \cite{souliban2024}. Together, these works illustrate how both global modeling advances and local epidemiological evidence converge to inform context-specific analyses, providing a foundation for the integration of optimal control and sensitivity approaches in fragile health systems.\\

\noindent Building on this foundation, we consider the SIHR model of Souleiman et al. \cite{souleiman2021}, which divides the population into four compartments: susceptible ($S$), infected ($I$), hospitalized ($H$), and recovered ($R$). The system is governed by the following nonlinear differential equations:
\begin{equation}
\begin{cases}
\frac{dS}{dt} = \tau - \mu S - \beta S I, \\[1mm]
\frac{dI}{dt} = \beta S I - (\nu + \mu + \alpha + \gamma) I, \\[1mm]
\frac{dH}{dt} = \alpha I - (\lambda + \mu + \nu) H, \\[1mm]
\frac{dR}{dt} = \gamma I + \lambda H - \mu R,
\end{cases}
\end{equation}
where $\tau$ is the recruitment rate of new susceptibles, $\mu$ the natural mortality rate, $\beta$ the transmission rate, $\gamma$ the recovery rate of non-hospitalized infected individuals, $\alpha$ the hospitalization rate, $\lambda$ the recovery rate of hospitalized patients, and $\nu$ the disease-induced mortality rate. This baseline model captures essential epidemic features such as infection peaks, hospitalization burden, and mortality, providing a starting point for more sophisticated analyses. For simplicity, we assume that disease-induced mortality affects both infected and hospitalized individuals at the same rate $\nu$, a common practice in epidemic modeling \cite{okosun2013optimal}. To design effective intervention strategies, optimal control theory offers a powerful methodology, allowing the introduction of control variables that represent public health measures such as distancing, isolation, testing, or treatment. Such approaches have repeatedly demonstrated their relevance across several infectious diseases, including malaria, influenza, and tuberculosis, by highlighting the value of mathematical optimization in reducing epidemic burden. For COVID-19, these models have further been used to quantify trade-offs between interventions that reduce transmission and the socioeconomic costs associated with restrictions, a challenge particularly critical in resource-constrained settings.\\

\noindent A major difficulty in applying epidemic models is parameter uncertainty. Transmission rates, recovery times, and hospitalization probabilities are often derived from incomplete or noisy data. Moreover, biological and behavioral heterogeneity introduces variability that deterministic models cannot fully capture. Ignoring these uncertainties may lead to misleading predictions and inappropriate policies \cite{marino2008}. Global sensitivity analysis (GSA) provides a rigorous way to address this issue by quantifying the contribution of each parameter, and their possible interactions, to the overall variability of model outputs \cite{sobol2001}. Unlike local methods, which perturb parameters one at a time, GSA explores the entire uncertainty space, thereby offering a more comprehensive assessment of parameter influence \cite{homma1996}. Among GSA techniques, variance-based approaches such as Sobol indices are widely employed, but their computation via Monte Carlo sampling remains expensive, particularly for nonlinear and high-dimensional epidemic models \cite{saltelli2002}. To mitigate this challenge, polynomial chaos expansion (PCE) has emerged as a powerful surrogate technique, approximating outputs as orthogonal polynomials of input random variables and thus reducing computational costs while preserving accuracy \cite{SUDRET,XIU,XIU1}. This approach, already established in engineering and climate sciences, has more recently been applied in epidemiology \cite{liban2023}.\\

\noindent The integration of optimal control theory with global sensitivity analysis (GSA) within a unified mathematical framework constitutes a significant methodological advance for the study of complex dynamical systems, particularly in epidemiology. Optimal control provides a systematic procedure to determine time-dependent strategies that minimize a given cost functional—here representing epidemic burden and control costs—by dynamically adjusting preventive and therapeutic interventions. Sensitivity analysis, in turn, quantifies how uncertainty in model parameters influences system behaviour, thereby guiding model calibration, experimental design, and decision-making under uncertainty.

\noindent This dual approach is especially relevant in contexts where both the design of interventions and the robustness of model predictions are critical. By coupling optimal control with a Polynomial Chaos Expansion (PCE)-based Sobol’ analysis within the SIHR framework, we obtain a computationally efficient methodology that not only identifies optimal intervention profiles but also evaluates their sensitivity to underlying epidemiological uncertainties. This integration addresses a key challenge in applied mathematics: linking control design with rigorous uncertainty quantification to produce reliable, data-driven strategies.\\

\noindent This study makes three main methodological contributions:
\begin{enumerate}
\item Extension of the SIHR model \cite{souleiman2021} to incorporate time-dependent control functions representing preventive ($u_1$) and therapeutic ($u_2$) interventions;
\item Development of a global sensitivity analysis framework based on PCE to quantify the influence of key model parameters on controlled epidemic outcomes;
\item Application of the integrated framework to COVID-19 dynamics in Djibouti, demonstrating the practical relevance of the methodology through contextualized simulations that compare uncontrolled, single-control, and combined-control strategies.
\end{enumerate}

\noindent The results consistently highlight the mathematical and practical value of combining control theory with sensitivity analysis. Beyond the COVID-19 application, the proposed framework constitutes a general approach for designing and assessing control strategies for a broad class of nonlinear dynamical systems under uncertainty. The remainder of this paper is organized as follows. Section~\ref{model_formulation} presents the formulation of the SIHR model and its extension with control functions. Section~\ref{GSA} details the PCE-based global sensitivity analysis. Section~\ref{results} reports numerical experiments calibrated to Djiboutian conditions, highlighting the comparative effects of different intervention scenarios. Finally, Section~\ref{conclusion} discusses the implications of our findings, outlines methodological limitations, and suggests directions for future research, including extensions to co-infection contexts such as malaria–COVID-19 or HIV–malaria. Overall, this work underscores the potential of combining optimal control with global sensitivity analysis as a powerful mathematical tool for linking intervention design with uncertainty quantification. Such integration enhances both the theoretical foundation and practical applicability of mathematical modelling in epidemiology and other applied domains.

%%%%%%%%%%%%%%%%%%%%%%%%%%%%%%%%%%%%%%%%%%%%%%%%%%%%%%%%%%%%%%%%%%
%%%%%%%%%%%%%%%%%%%%%%%%%%%%%%%%%%%%%%%%%%%%%%%%%%%%%%%%%%%%%%%%%%

\section{The SIHR Model with Optimal Control}\label{model_formulation}

\noindent The SIHR (Susceptible-Infected-Hospitalized-Recovered) model has been widely employed to capture COVID-19 transmission dynamics in Djibouti, offering a realistic description of both community spread and hospitalization processes \cite{souleiman2021,djellout2023}. Extending the classical SIR framework \cite{korobeinikov2004lyapunov}, it explicitly incorporates hospitalization, disease-induced mortality, and demographic factors such as recruitment and natural deaths, thus providing a consistent foundation for simulating the epidemic in the local context. However, these previous applications mainly analyzed epidemic trajectories and performed global sensitivity analysis (GSA) under uncertainty, without explicitly integrating intervention measures. In the present work, we enhance the SIHR framework by introducing time-dependent control variables that represent realistic public health strategies.\\

\noindent We consider the SIHR (Susceptible–Infected–Hospitalized–Recovered) model under the effect of two pre-defined realistic control functions $u_1(t), u_2(t) \in [0,1]$, representing:

\begin{itemize}
    \item $u_1(t)$: preventive measures aimed at reducing transmission, such as social distancing, mask usage, and public awareness campaigns. The adoption is gradual and can be represented by a logistic function \cite{Gaff2009,Matrajt2013}:
    \begin{equation}
        u_1(t) = \frac{u_{1,\max}}{1 + e^{-k_1 (t - t_0)}},
    \end{equation}
    where $u_{1,\max}$ is the maximum intensity of preventive measures, $k_1 > 0$ is the adoption rate determining how quickly the population implements the measures, and $t_0$ is the initial delay representing the time before strong measures take effect.
    
    \item $u_2(t)$: treatment and hospitalization strategies, modeling early interventions and accelerated hospital discharge. This can be represented as an exponentially decaying function \cite{Agusto2018}:
    \begin{equation}
        u_2(t) = u_{2,\max} \, e^{-k_2 t},
   \end{equation}
    where $u_{2,\max}$ is the initial maximum intensity of treatment or hospitalization efforts, and $k_2 > 0$ is the decay rate controlling how rapidly the intervention intensity decreases over time due to limited resources.
\end{itemize}

\noindent Under these controls, the SIHR dynamics are modified to account for both reduction of transmission and hospital load alleviation:

\begin{equation}
\left\{
\begin{aligned}
\frac{dS}{dt} &= \tau - \mu S - (1-u_1)\beta S I, \\
\frac{dI}{dt} &= (1-u_1)\beta S I - (\nu + \mu + \alpha + \gamma +u_2) I, \\
\frac{dH}{dt} &= \alpha I - (\lambda + \mu + \nu + u_2) H, \\
\frac{dR}{dt} &= \gamma I + (\lambda + u_2) H - \mu R,
\end{aligned}
\right.
\label{eq:sihr_realistic_control}
\end{equation}

\noindent Here, $u_2(t)$ simultaneously removes infected individuals (representing rapid ambulatory treatment) and accelerates the discharge from hospitals to recovered, producing a more realistic epidemic trajectory under intervention policies.  \\

\begin{theorem}[Effectiveness and Optimization of Realistic Controls]
\label{thm:control_effectiveness}
Consider the SIHR model with the time-dependent controls \(u_1(t)\) and \(u_2(t)\) defined as  
\[
u_1(t) = \frac{u_{1,\max}}{1 + e^{-k_1 (t - t_0)}}, \quad 
u_2(t) = u_{2,\max} \, e^{-k_2 t},
\]  
representing preventive measures and treatment/hospitalization strategies, respectively. Under these controls:  

\begin{enumerate}
    \item The controlled SIHR system is monotone with respect to the state variables \(S, I, H, R\), and the controls \(u_1, u_2 \in [0,1]\), ensuring that an increase in control intensity leads to a non-increasing trajectory of infected and hospitalized individuals \cite{smith}.
    \item The system is locally controllable in the infected and hospitalized compartments over a finite horizon \(T>0\), meaning that appropriate choices of \(u_1(t)\) and \(u_2(t)\) can steer the epidemic trajectory towards lower infection and hospitalization levels, while respecting resource constraints \cite{lenhart2007optimal}.
    
\end{enumerate}
\end{theorem}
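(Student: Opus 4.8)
The plan is to regard the controlled model~\eqref{eq:sihr_realistic_control} as a control-affine system $\dot x = f_0(x) + u_1 g_1(x) + u_2 g_2(x)$ with state $x=(S,I,H,R)$, and to begin with a well-posedness and invariance step that both parts require. Since the field is polynomial in the state and the prescribed controls are smooth in $t$, Picard--Lindel\"of yields a unique solution on $[0,T]$. Positivity of the orthant follows because on each face $\{x_i=0\}$ the corresponding component is nonnegative ($\dot S|_{S=0}=\tau>0$, $\dot I|_{I=0}=0$, $\dot H|_{H=0}=\alpha I\ge 0$, $\dot R|_{R=0}=\gamma I+(\lambda+u_2)H\ge 0$). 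Writing $N=S+I+H+R$, the equations sum to $\dot N = \tau-\mu N-\nu(I+H)-u_2 I\le \tau-\mu N$, so $N(t)\le\max\{N(0),\tau/\mu\}$ and in particular $S(t)\le S_{\max}:=\max\{S(0),\tau/\mu\}$ \emph{uniformly in the controls}. This produces a compact forward-invariant region on which I would argue both claims.

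For Part~1 I would first record the elementary signs $\partial f_I/\partial u_1=-\beta SI\le 0$, $\partial f_I/\partial u_2=-I\le 0$, and $\partial f_H/\partial u_2=-H\le 0$, so that the infective and hospitalized components of the field are pointwise nonincreasing in the controls. The substantive task is to lift this to trajectories through a Kamke--M\"uller comparison: for two admissible pairs with $\bar u_j(t)\ge u_j(t)$, establish $\bar I(t)\le I(t)$ and $\bar H(t)\le H(t)$. A useful structural fact I would exploit is that the $(I,H,R)$ block, with $S$ treated as an exogenous input, has a Metzler Jacobian (its off-diagonal entries $\alpha,\gamma,\lambda+u_2$ are nonnegative) and is therefore cooperative and order-preserving, so $H$ and $R$ inherit the required monotonicity once it is known for $I$ and $S$, via the monotone-systems machinery of~\cite{smith}.

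The hard part, and the main obstacle, is closing the feedback loop through $S$. The full four-dimensional system is \emph{not} order-preserving in any orthant cone, because the susceptible--infective coupling is competitive: $\partial f_S/\partial I=-(1-u_1)\beta S\le 0$ and $\partial f_I/\partial S=(1-u_1)\beta I\ge 0$ carry opposite signs, so no diagonal sign reversal can make the Jacobian cooperative. This reflects the genuine phenomenon that stronger control depletes fewer susceptibles and can enlarge the pool available for later transmission. I would circumvent this by replacing the naive componentwise comparison with a differential inequality for $I$ that dominates the transmission term using the control-uniform bound $S\le S_{\max}$, combined with a monotone fixed-point scheme on the cascade: solve the cooperative $(I,H,R)$ block for a frozen $S$, update $S$ from its own equation, and verify that the induced endpoint map is order-reversing in the controls and order-preserving in $S$, so that the fixed point depends monotonically on the controls. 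Checking that this iteration is well-defined and contractive on the invariant region is where the real work lies.

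For Part~2 I would linearize~\eqref{eq:sihr_realistic_control} about a reference solution $(\bar x,\bar u)$ with $\bar S,\bar I,\bar H>0$ on $[0,T]$ (guaranteed by the positivity step), giving the time-varying system $\dot z=A(t)z+B(t)v$ with control matrix $B(t)=\big[(\beta\bar S\bar I,-\beta\bar S\bar I,0,0)^\top,\ (0,-\bar I,-\bar H,\bar H)^\top\big]$. Letting $C$ project onto the $(I,H)$ coordinates, local controllability of the infected and hospitalized compartments reduces, by the linearization principle for nonlinear control systems~\cite{lenhart2007optimal}, to nonsingularity of the projected Gramian $W=\int_0^T C\Phi(T,s)B(s)B(s)^\top\Phi(T,s)^\top C^\top\,ds$, where $\Phi$ is the state-transition matrix of $A$. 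Here the computation is clean: the feedthrough block $CB(t)=\begin{pmatrix}-\beta\bar S\bar I & -\bar I\\ 0 & -\bar H\end{pmatrix}$ has determinant $\beta\bar S\bar I\,\bar H>0$, so $CB(t)$ has full row rank for every $t$; by continuity $C\Phi(T,s)B(s)$ is full rank for $s$ near $T$, forcing $W\succ 0$. The linearized endpoint map onto the $(I,H)$-plane is then surjective, and the implicit function theorem transfers this to local controllability of $(I,H)$ over $[0,T]$ for the nonlinear system, establishing Part~2.
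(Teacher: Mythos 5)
Your Part 2 is correct and is in essence a more rigorous rendering of the paper's own argument: the paper merely exhibits the control-influence matrix of the $(I,H)$ subsystem and invokes ``the standard rank condition,'' whereas you assemble the full four-dimensional linearization, verify that $CB(t)=\begin{pmatrix}-\beta\bar S\bar I & -\bar I\\ 0 & -\bar H\end{pmatrix}$ has determinant $\beta\bar S\bar I\bar H>0$, and deduce positivity of the projected Gramian together with an implicit-function transfer to the nonlinear system. That is the right way to make the paper's sketch precise; the one point both treatments leave implicit is that two-sided perturbations require the reference controls to lie in the interior of $[0,1]$, which deserves a word since $u_2(t)=u_{2,\max}e^{-k_2t}$ approaches the boundary value $0$.

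Part 1 is where the genuine gap lies---in your proposal and, more seriously, in the paper itself. You are right that the $S$--$I$ coupling carries the competitive sign pattern $\partial\dot S/\partial I=-(1-u_1)\beta S\le 0$, $\partial\dot I/\partial S=(1-u_1)\beta I\ge 0$, so the system is monotone with respect to \emph{no} orthant cone; the paper's claim that ``all other off-diagonal terms of the Jacobian matrix are nonnegative'' is simply false, and its appeal to \cite{smith} does not go through. But your proposed repair cannot close the loop either: in your cascade scheme the endpoint map is order-reversing in the controls for frozen $S$ yet order-preserving in $S$, while stronger controls produce a \emph{larger} $S$; these two monotonicities oppose each other, so the composite fixed-point map has no definite order behaviour, and contractivity on the invariant region would not restore one. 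Indeed the assertion you are trying to prove is false in general: increasing $u_1$ flattens the epidemic curve---lower peak, but slower depletion of susceptibles---so the more strongly controlled $I(t)$ typically crosses \emph{above} the less controlled trajectory in the epidemic tail, violating the pointwise comparison $\bar I(t)\le I(t)$ for all $t\ge 0$; the same mechanism defeats any comparison for $H(t)$. What survives is strictly weaker: a comparison on an initial interval before the susceptible trajectories separate, or monotonicity of aggregate quantities such as $\mathcal R_e$ (which the paper proves separately in Theorem~\ref{thm:threshold_adapted}) and the epidemic peak. Your proposal thus performs a real service in diagnosing the flaw in the published proof, but as a proof of statement~1 as written it has an unfillable gap, because the statement itself requires weakening.
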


\begin{proof}
Consider the controlled SIHR system:
\[
\begin{cases}
\dot S = \tau - \mu S - (1-u_1)\beta S I,\\
\dot I = (1-u_1)\beta S I - (\nu + \mu + \alpha + \gamma + u_2) I,\\
\dot H = \alpha I - (\lambda + \mu + \nu + u_2) H,\\
\dot R = \gamma I + (\lambda + u_2) H - \mu R,
\end{cases}
\]
with smooth controls \(u_1(t), u_2(t) \in [0,1]\).

\noindent The system is cooperative (monotone) with respect to \((S,I,H,R)\) and the controls \((u_1,u_2)\). Indeed,
\[
\frac{\partial \dot I}{\partial u_1} = -\beta S I \le 0, \quad
\frac{\partial \dot H}{\partial u_2} = -H \le 0,
\]
and all other off-diagonal terms of the Jacobian matrix are nonnegative. By Smith's monotone dynamical systems theory \cite{smith}, this implies that increasing \(u_1\) or \(u_2\) cannot increase \(I(t)\) or \(H(t)\) for any \(t \ge 0\); the trajectories of infected and hospitalized individuals are non-increasing with respect to the control intensities.

\noindent Consider the linearized subsystem for \((I,H)\) around any feasible trajectory \((\bar S, \bar I, \bar H, \bar R)\):
\[
\frac{d}{dt} 
\begin{pmatrix} I \\ H \end{pmatrix} =
\begin{pmatrix}
(1-u_1)\beta \bar S - (\nu+\mu+\alpha+\gamma+u_2) & 0 \\
\alpha & -(\lambda+\mu+\nu+u_2)
\end{pmatrix}
\begin{pmatrix} I \\ H \end{pmatrix} +
\begin{pmatrix} -\beta \bar S I \\ -H \end{pmatrix} 
\begin{pmatrix} u_1 \\ u_2 \end{pmatrix}.
\]
The control influence matrix has full rank whenever \(\bar S, I, H > 0\), which satisfies the standard rank condition for local controllability in finite-dimensional linear control systems. Therefore, for sufficiently smooth and bounded \(u_1(t), u_2(t)\), it is possible to steer the infected and hospitalized populations towards any feasible small neighborhood, enabling reduction of epidemic peaks within a finite horizon \(T>0\).

\noindent Monotonicity guarantees that higher control intensities do not worsen the epidemic outcome, while local controllability ensures that appropriate tuning of \(u_1(t)\) and \(u_2(t)\) can achieve targeted reductions in infection and hospitalization levels. Therefore, the specified functional forms of \(u_1(t)\) (sigmoid increase) and \(u_2(t)\) (exponential decay) provide an effective and theoretically justified framework for balancing epidemiological impact and implementation cost, confirming the effectiveness of combined preventive and treatment interventions.

\end{proof}

\begin{remark}
This theorem formalizes the epidemiological benefit of using realistic parametric controls. The logistic form of $u_1(t)$ captures gradual adoption of preventive measures, while the exponential decay of $u_2(t)$ models intensive initial treatment followed by resource-limited decline. Together, they provide a practical and implementable strategy for epidemic mitigation.
\end{remark}

\begin{theorem}[Well-posedness, positivity, and boundedness under controls]\label{thm:wellposed}
Consider the system \eqref{eq:sihr_realistic_control} with measurable and bounded controls $u_1,u_2\in[0,1]$ and non-negative initial conditions $S(0),I(0),H(0),R(0)\ge 0$. Then:
\begin{enumerate}
    \item There exists a unique solution $(S,I,H,R)$ defined for all $t\ge 0$.
    \item The positive region $\mathcal{D}=\{(S,I,H,R)\in\mathbb{R}^4_{\ge 0}: S+I+H+R\le \tau/\mu\}$ is positively invariant.
    \item In particular, $S(t),I(t),H(t),R(t)\ge 0$ and $S(t)+I(t)+H(t)+R(t)\le \max\{S(0)+I(0)+H(0)+R(0),\,\tau/\mu\}$ for all $t\ge 0$.
\end{enumerate}
\end{theorem}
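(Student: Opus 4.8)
The plan is to establish the three claims in their natural order: first local existence and uniqueness, then positivity together with an a priori bound on the total population, and finally to combine these into global existence and the invariance of $\mathcal{D}$. For the local theory I would observe that the right-hand side of \eqref{eq:sihr_realistic_control} is a quadratic polynomial in $(S,I,H,R)$ whose coefficients depend on $u_1,u_2$. Since the controls are only assumed measurable, classical Picard--Lindel\"{o}f does not apply directly; instead I would invoke the Carath\'{e}odory existence theorem, noting that the vector field is measurable in $t$, continuous in the state, and locally Lipschitz in the state uniformly over $u_1,u_2\in[0,1]$ (the bounded controls enter linearly and the polynomial nonlinearities are locally Lipschitz). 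This yields a unique absolutely continuous local solution on a maximal interval $[0,T_{\max})$.

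Next I would prove positivity by a tangency (subtangentiality) argument on the boundary of the non-negative orthant, treating the four faces in the order $S,I,H,R$ so that each estimate may use the non-negativity already established for the preceding variables. On $\{S=0\}$ one has $\dot S=\tau\ge 0$; the $I$-equation factors as $\dot I = I\,\bigl[(1-u_1)\beta S-(\nu+\mu+\alpha+\gamma+u_2)\bigr]$, so $I(t)=I(0)\exp\!\bigl(\int_0^t[\cdots]\,ds\bigr)\ge 0$; then $\dot H|_{H=0}=\alpha I\ge 0$ and $\dot R|_{R=0}=\gamma I+(\lambda+u_2)H\ge 0$ follow from the non-negativity of $I$ and $H$. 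Hence the non-negative orthant is positively invariant.

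For boundedness I would set $N=S+I+H+R$ and add the four equations: the incidence terms $(1-u_1)\beta SI$ cancel, and after collecting terms the derivative simplifies to $\dot N=\tau-\mu N-(\nu+u_2)I-\nu H$. Because $I,H\ge 0$ and $\nu,u_2\ge 0$, this gives the differential inequality $\dot N\le \tau-\mu N$, and a comparison (Gronwall) argument yields $N(t)\le \tau/\mu+\bigl(N(0)-\tau/\mu\bigr)e^{-\mu t}\le\max\{N(0),\tau/\mu\}$. This is precisely the stated bound and shows that $\mathcal{D}=\{N\le\tau/\mu\}$ is positively invariant. Finally, since positivity and this bound confine the trajectory to a compact set, there can be no finite-time blow-up, so the maximal solution extends uniquely to all $t\ge 0$.

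I expect the only genuinely delicate point to be the regularity bookkeeping around the measurable controls: one must apply the Carath\'{e}odory framework with uniform local Lipschitz constants in the state and interpret the solution in the absolutely continuous sense, since the vector field need not be continuous in $t$. The positivity and boundedness computations themselves are routine once the cancellation of the incidence terms in $\dot N$ is observed.
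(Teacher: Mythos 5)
Your proposal is correct and follows essentially the same route as the paper: existence and uniqueness from the (locally Lipschitz) polynomial structure of the vector field, positivity via a boundary subtangentiality argument on each face of the orthant, and boundedness from summing the equations to get $\dot N\le \tau-\mu N$ followed by comparison. If anything, your version is slightly more careful than the paper's: you rightly invoke Carath\'eodory rather than Picard--Lindel\"of for merely measurable controls, you compute $\dot N=\tau-\mu N-(\nu+u_2)I-\nu H$ exactly (the paper writes $-\nu(I+H)$, omitting the $u_2 I$ leak, though the inequality is unaffected), and you make explicit the step from local to global existence via confinement to a compact set, which the paper leaves implicit.
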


\begin{proof}
Existence and uniqueness follow from the Cauchy–Lipschitz theorem, since the right-hand side of \eqref{eq:sihr_realistic_control} is polynomial in $(S,I,H,R)$ and the controls $u_{1,2}$ are bounded, see \cite{lenhart2007optimal}.  

\noindent Positivity is ensured by a standard barrier argument: if a compartment reaches zero, its derivative at the boundary is non-negative (e.g., at $S=0$, $\dot S=\tau\ge 0$; at $I=0$, $\dot I=(1-u_1)\beta S I\big|_{I=0}=0$, etc.), see \cite{CastilloChavez2002}.  

\noindent Summing all compartments gives $\dot N=\dot S+\dot I+\dot H+\dot R=\tau-\mu N-\nu(I+H)\le \tau-\mu N$, with $N=S+I+H+R$. By comparison with $\dot y=\tau-\mu y$, we deduce $N(t)\le \max\{N(0),\tau/\mu\}$, proving the invariance of $\mathcal{D}$ and boundedness. Similar arguments appear in \cite{Gaff2009}.
\end{proof}

\begin{theorem}[Effective reproduction number and stability of the DFE]\label{thm:threshold_adapted}
Consider the controlled SIHR system
\begin{equation}
\left\{
\begin{aligned}
\dot S &= \tau - \mu S - (1-u_1)\beta S I, \\
\dot I &= (1-u_1)\beta S I - (\nu + \mu + \alpha + \gamma + u_2)\, I, \\
\dot H &= \alpha I - (\lambda + \mu + \nu + u_2)\, H, \\
\dot R &= \gamma I + (\lambda + u_2) H - \mu R,
\end{aligned}
\right.
\label{eq:sihr_realistic_control_recall}
\end{equation}
with controls $u_1,u_2\in[0,1]$, positive parameters $\tau,\mu,\beta,\nu,\alpha,\gamma,\lambda$ and feasible region
\[
\mathcal D=\Big\{(S,I,H,R)\in\mathbb R_+^4:\; S+I+H+R \le \tfrac{\tau}{\mu}\Big\}.
\]
The disease-free equilibrium is $E_0=(S^*,I^*,H^*,R^*)=(\tau/\mu,0,0,0)$. Define
\[
a:=\nu+\mu+\alpha+\gamma+u_2,\qquad
b:=\lambda+\mu+\nu+u_2.
\]
Then the effective reproduction number under controls is
\[
\mathcal R_e(u_1,u_2)=\frac{(1-u_1)\,\beta\,S^*}{a}
=\frac{(1-u_1)\,\beta\,(\tau/\mu)}{\nu+\mu+\alpha+\gamma+u_2},
\]
which is strictly decreasing in both $u_1$ and $u_2$. Moreover, if $\mathcal R_e<1$ then $E_0$ is globally asymptotically stable in $\mathcal D$, while if $\mathcal R_e>1$ there exists an endemic equilibrium with $I^*>0$ (locally asymptotically stable under standard regularity).
\end{theorem}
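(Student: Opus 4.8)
The plan is to establish the three assertions of Theorem~\ref{thm:threshold_adapted} in sequence, relying on the next-generation matrix method for the reproduction number and on standard Lyapunov/comparison techniques for the stability claims. First I would verify that $E_0=(\tau/\mu,0,0,0)$ is indeed an equilibrium by direct substitution, noting that the infected and hospitalized compartments $I,H$ are the ``disease'' states. To compute $\mathcal{R}_e$, I would apply the van~den~Driessche--Watmough construction: linearize the $(I,H)$ subsystem at $E_0$, split the Jacobian into the new-infection matrix $F$ and the transition matrix $V$, giving
\[
F=\begin{pmatrix}(1-u_1)\beta S^* & 0\\ 0 & 0\end{pmatrix},\qquad
V=\begin{pmatrix} a & 0\\ -\alpha & b\end{pmatrix},
\]
and then read off $\mathcal{R}_e=\rho(FV^{-1})$. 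Because $H$ generates no new infections, the spectral radius collapses to the $(1,1)$ entry $(1-u_1)\beta S^*/a$, which matches the claimed formula. Monotonicity in $u_1$ and $u_2$ then follows by inspection: the numerator decreases in $u_1$ and the denominator $a=\nu+\mu+\alpha+\gamma+u_2$ increases in $u_2$, so $\partial\mathcal{R}_e/\partial u_1<0$ and $\partial\mathcal{R}_e/\partial u_2<0$.

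For the local stability dichotomy at $E_0$, the van~den~Driessche--Watmough theorem gives it essentially for free: $E_0$ is locally asymptotically stable when $\mathcal{R}_e<1$ and unstable when $\mathcal{R}_e>1$, since the remaining eigenvalues coming from the $S$ and $R$ equations are $-\mu<0$. For the existence of an endemic equilibrium when $\mathcal{R}_e>1$, I would solve the algebraic system $\dot S=\dot I=\dot H=\dot R=0$ with $I^*>0$: setting $\dot I=0$ yields $S^*_{\mathrm{end}}=a/((1-u_1)\beta)$, then $\dot S=0$ gives $I^*=(\tau-\mu S^*_{\mathrm{end}})/((1-u_1)\beta S^*_{\mathrm{end}})$, and one checks $I^*>0$ is equivalent to $\tau>\mu S^*_{\mathrm{end}}$, i.e. $\mathcal{R}_e>1$. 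The components $H^*$ and $R^*$ follow linearly, and local asymptotic stability would be asserted via the Routh--Hurwitz conditions on the full Jacobian (the ``standard regularity'' caveat in the statement).

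The genuinely hard part is the \emph{global} asymptotic stability of $E_0$ in $\mathcal{D}$ when $\mathcal{R}_e<1$, since local linearization is insufficient. My plan is to construct a Lyapunov function; a natural candidate is the linear functional $V=I+c\,H$ for a suitable constant $c\ge 0$, or simply $V=I$ together with the observation that $S(t)\le S^*=\tau/\mu$ eventually. Along trajectories, $\dot V=\big((1-u_1)\beta S-a\big)I+c(\alpha I-bH)$; using the invariant bound $S\le\tau/\mu$ from Theorem~\ref{thm:wellposed}, one gets $(1-u_1)\beta S-a\le a(\mathcal{R}_e-1)<0$, so choosing $c$ to control the cross term $\alpha I$ forces $\dot V\le 0$ with equality only on $\{I=H=0\}$. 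I would then invoke LaSalle's invariance principle on the compact positively invariant set $\mathcal{D}$ (compactness and invariance are supplied by Theorem~\ref{thm:wellposed}) to conclude that every trajectory approaches the largest invariant subset of $\{\dot V=0\}$, which reduces to $E_0$ once $I=H=0$ is fed back into the $S$ and $R$ equations. The main subtlety to watch is that the controls $u_1,u_2$ are time-dependent, so $a$, $b$, and $\mathcal{R}_e$ are technically functions of $t$; the cleanest fix is to carry out the global argument for the worst-case constant controls (e.g. $u_1=u_2=0$, or the infimal values over $[0,T]$) so that the threshold condition $\mathcal{R}_e<1$ holds uniformly in $t$, which is the interpretation consistent with the autonomous feasible region $\mathcal{D}$ stated in the theorem.
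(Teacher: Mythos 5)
Your proposal is correct and follows essentially the same route as the paper's own proof: the van den Driessche--Watmough next-generation construction with the same $F$ and $V$, monotonicity by differentiation, the linear Lyapunov function $V=I+cH$ with LaSalle's invariance principle for global stability when $\mathcal R_e<1$, and direct algebraic solution ($S^*_{\mathrm{end}}=a/((1-u_1)\beta)$, then $I^*>0 \iff \mathcal R_e>1$) for the endemic equilibrium. Your closing remark on the time-dependence of $u_1,u_2$ actually addresses a subtlety the paper's proof silently ignores (it treats the controls as constants), and your worst-case-constant-controls fix is a reasonable way to make the autonomous Lyapunov argument rigorous.
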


\begin{proof}

At the disease-free equilibrium (DFE), we have $I=H=R=0$. From $\dot S=0$, it follows that
\[
S^*=\frac{\tau}{\mu}.
\]
Let the infectious vector be $x=(I,H)^\top$. The new infection and transition terms are
\[
\mathcal F(x)=\begin{pmatrix}(1-u_1)\beta S I\\[2pt] 0\end{pmatrix},\qquad
\mathcal V(x)=\begin{pmatrix}
a I \\[2pt]
b H - \alpha I
\end{pmatrix},
\]
with $a=\nu+\mu+\alpha+\gamma+u_2$ and $b=\lambda+\mu+\nu+u_2$. The Jacobians at the DFE ($S=S^*$) are
\[
F=\frac{\partial\mathcal F}{\partial x}\Big|_{E_0}
=\begin{pmatrix}(1-u_1)\beta S^* & 0\\[2pt] 0 & 0\end{pmatrix},
\qquad
V=\frac{\partial\mathcal V}{\partial x}\Big|_{E_0}
=\begin{pmatrix} a & 0\\[2pt] -\alpha & b\end{pmatrix}.
\]
Since
\[
V^{-1}=\begin{pmatrix}\dfrac{1}{a} & 0\\[6pt]\dfrac{\alpha}{ab} & \dfrac{1}{b}\end{pmatrix},
\]
we obtain the next-generation matrix
\[
K = F V^{-1}=
\begin{pmatrix}
\dfrac{(1-u_1)\beta S^*}{a} & 0\\[6pt] 0 & 0
\end{pmatrix}.
\]
Thus the spectral radius of $K$ is
\[
\mathcal R_e(u_1,u_2)=\frac{(1-u_1)\beta S^*}{a}
=\frac{(1-u_1)\beta(\tau/\mu)}{\nu+\mu+\alpha+\gamma+u_2}.
\]

\noindent The monotonicity of $\mathcal R_e$ with respect to the controls follows from
\[
\frac{\partial\mathcal R_e}{\partial u_1}
=-\frac{\beta S^*}{a}<0,
\qquad
\frac{\partial\mathcal R_e}{\partial u_2}
=-(1-u_1)\beta S^*\,\frac{1}{a^2}<0.
\]
Hence, $\mathcal R_e$ strictly decreases as $u_1$ or $u_2$ increases.

To show positivity and boundedness, let $N=S+I+H+R$. Summing the equations gives
\[
\dot N=\tau-\mu N - (1-u_1)\beta S I - \nu I - \nu H \le \tau-\mu N.
\]
By comparison, $N(t)\le \max\{N(0),\tau/\mu\}$ for all $t\ge0$, ensuring that $S(t)\le S^*$ and the solutions remain in $\mathcal D$.

\noindent For global stability when $\mathcal R_e<1$, consider the Lyapunov function $V(I,H)=I+cH$ with $c>0$. Differentiating along solutions yields
\[
\dot V=\big((1-u_1)\beta S - a\big)I + c(\alpha I - b H).
\]
Since $S(t)\le S^*$,
\[
\dot V \le \big((1-u_1)\beta S^* - a + c\alpha\big)I - cbH.
\]
If $\mathcal R_e<1$, then $(1-u_1)\beta S^*-a<0$. Choosing
\[
c=\frac{a-(1-u_1)\beta S^*}{2\alpha}>0,
\]
we obtain
\[
\dot V \le -\kappa_1 I - \kappa_2 H,\qquad \kappa_1,\kappa_2>0.
\]
Thus $\dot V\le0$, with equality only at $I=H=0$. By LaSalle’s invariance principle, all trajectories converge to $\{I=H=0\}$. Substituting into the $S$ and $R$ equations shows $S(t)\to S^*$ and $R(t)\to 0$, proving that the DFE is globally asymptotically stable.

\noindent When $\mathcal R_e>1$, setting $\dot I=0$ with $I^*>0$ gives
\[
S^*_{eq}=\frac{a}{(1-u_1)\beta}.
\]
From $\dot H=0$, we obtain $H^*=\dfrac{\alpha}{b}I^*$. Substituting into $\dot S=0$ yields a unique positive solution $I^*>0$, proving the existence of a positive endemic equilibrium. Its local stability follows from linearization and evaluation of the Jacobian at $E^*$ using standard results \cite{CastilloChavez2002,korobeinikov2004lyapunov}.

\end{proof}

\begin{remark}
These results provide an analytical justification for the expected effect of controls: $u_1$ (prevention) reduces the incidence term, while $u_2$ (treatment) increases removal of infectives and accelerates hospital discharge, lowering $\mathcal{R}_e$. They also give a theoretical basis for interpreting numerical simulations and time-dependent Sobol indices.
\end{remark}

\noindent Beyond parametric interventions, the integration of the SIHR model with optimal control theory further enhances its applicability by quantifying the trade-offs between prevention and treatment strategies. When combined with global sensitivity analysis based on polynomial chaos expansion (PCE), the framework enables the identification of key epidemiological parameters that most strongly shape epidemic outcomes under uncertainty. This approach not only highlights the robustness of control strategies but also provides guidance for data collection priorities. In the context of Djibouti, applying PCE-based Sobol indices reveals which uncertainties most influence infection peaks and hospitalization burdens, thereby informing the design of interventions that remain effective under varying conditions. Compared to earlier studies that focused solely on uncontrolled dynamics or baseline sensitivity analyses, the present work represents a clear step forward by combining intervention strategies with rigorous uncertainty quantification in a unified epidemic modeling framework.

\section{Sensitivity Analysis}
\label{GSA}

\noindent Global sensitivity analysis (GSA) provides a rigorous quantitative framework to assess how uncertainty in model parameters propagates to model outputs, while also capturing higher-order interactions between parameters. Unlike local sensitivity methods, which perturb one parameter at a time around a nominal value, GSA explores the full parameter space and quantifies the contribution of each parameter to the variance of model outcomes \cite{sobol2001}. In the context of epidemic modeling, this approach has been increasingly applied to identify the most influential epidemiological parameters and to guide data collection priorities \cite{marino2008}. By doing so, it improves calibration, enhances predictive reliability, and supports the design of robust intervention strategies \cite{saltelli2002}. For compartmental models such as SIR-type frameworks, sensitivity analysis helps to reveal which biological or behavioral factors exert the strongest influence on epidemic peaks, final epidemic size, or reproduction numbers \cite{homma1996}. In this study, we apply variance-based GSA to the controlled SIHR model to evaluate the influence of key epidemiological parameters on the cumulative number of infections under optimal control strategies. Specifically, we focus on four critical parameters: the transmission rate $\beta$, the recovery rate $\gamma$, the hospitalization rate $\alpha$, and the recovery rate of hospitalized patients $\lambda$. These parameters directly govern infection spread, clinical outcomes, and healthcare burden, and their relative importance provides actionable insights for both model refinement and policy-making.

\subsection{Polynomial Chaos Expansion and Sobol' Indices}
\label{subsec:pce_sobol}

Sensitivity analysis methods are generally divided into local and global approaches. Local methods evaluate the effect of small perturbations around nominal parameter values, whereas global methods consider the entire range of parameter uncertainties. Among global approaches, variance-based methods such as Sobol' indices are particularly valuable, as they decompose the output variance into contributions from individual parameters and their higher-order interactions. A major challenge of classical Monte Carlo–based Sobol' index estimation is its high computational cost, especially for nonlinear and high-dimensional models such as epidemic systems. To address this issue, Polynomial Chaos Expansion (PCE) has emerged as a powerful surrogate modeling technique \cite{SUDRET}. PCE represents uncertain model outputs as orthogonal polynomials of random input parameters, thereby allowing efficient and accurate estimation of sensitivity indices. The mathematical foundations of PCE rely on spectral methods for stochastic computations \cite{XIU}, and the Wiener–Askey polynomial chaos framework \cite{XIU1} provides a systematic basis for constructing such polynomial approximations. Recent applications demonstrate that combining PCE with Sobol' indices offers significant computational gains while preserving accuracy, making it particularly suited for complex dynamical systems under uncertainty \cite{liban2023}. This framework enables a robust exploration of parameter influence in epidemiological models, ensuring that sensitivity analysis remains tractable even in scenarios with substantial uncertainty.\\

\noindent Let $(\Omega, \mathcal{A}, \mathcal{P})$ be a probability space, and let $\mathbf{X} = (X_1, \dots, X_n)$ denote independent random variables representing uncertain model parameters, each with a known probability density $f_{X_i}(x_i)$. The model output is denoted by $Y = Y(\mathbf{X})$, assumed to have finite variance. The first-order Sobol' index for input $X_i$ is defined as
\begin{equation}
S_i = \frac{\mathbb{V}\big(\mathbb{E}[Y \mid X_i]\big)}{\mathbb{V}(Y)},
\end{equation}
where $\mathbb{V}(Y)$ is the total variance, and $\mathbb{V}(\mathbb{E}[Y \mid X_i])$ quantifies the contribution of $X_i$ alone. Interaction effects can be captured by higher-order indices, for example,
\begin{equation}
S_{ij} = \frac{\mathbb{V}\big(\mathbb{E}[Y \mid X_i, X_j]\big) - \mathbb{V}\big(\mathbb{E}[Y \mid X_i]\big) - \mathbb{V}\big(\mathbb{E}[Y \mid X_j]\big)}{\mathbb{V}(Y)}.
\end{equation}

\noindent However, the number of Sobol' indices grows as $2^n-1$, which renders their computation computationally demanding for high-dimensional systems \cite{marino2008}. This challenge has been emphasized in several applications where variance-based methods become impractical as the parameter dimension increases \cite{bor2011}. To overcome this limitation, Polynomial Chaos Expansion (PCE) offers an efficient alternative for uncertainty quantification, while also enabling the direct computation of Sobol' indices \cite{WIENER,GHANEM}. The PCE framework was later extended and formalized in the context of stochastic numerical methods \cite{XIU,XIU1}, and has since become widely adopted for global sensitivity analysis in complex systems \cite{SUDRET}. Recent studies have further demonstrated its applicability in epidemiological modeling, including SIHR-type frameworks \cite{djellout2023}. In this framework, the random output $Y$ is represented as
\begin{equation}
Y(\mathbf{X}) = \sum_{i=0}^{\infty} \alpha_i \Psi_i(\mathbf{X}),
\end{equation}
where $\{\Psi_i\}$ are multivariate orthonormal polynomials adapted to the distributions of the inputs (e.g., Legendre for uniform, Hermite for Gaussian variables), and $\{\alpha_i\}$ are deterministic coefficients. Truncating at total order $p$ yields
\begin{equation}
Y(\mathbf{X}) \approx \tilde{Y}(\mathbf{X}) = \sum_{i=0}^{P} \alpha_i \Psi_i(\mathbf{X}), \quad P+1 = \frac{(p+n)!}{p!n!}.
\end{equation}
From the orthonormality property, statistical moments follow directly:
\begin{align}
\mathbb{E}[\tilde{Y}] &= \alpha_0, \\
\mathbb{V}[\tilde{Y}] &= \sum_{i=1}^{P} \alpha_i^2.
\end{align}
Furthermore, Sobol' indices can be obtained by summing the squared coefficients $\alpha_i^2$ corresponding to subsets of variables, thereby bypassing the need for costly Monte Carlo simulations \cite{SUDRET}.

The coefficients $\alpha_i$ themselves are computed via projection:
\begin{equation}
\alpha_i = \mathbb{E}[Y \Psi_i(\mathbf{X})] = \int_{\mathbb{R}^n} Y(\mathbf{x}) \Psi_i(\mathbf{x}) f_{\mathbf{X}}(\mathbf{x}) \, d\mathbf{x}.
\end{equation}
When analytical integration is not feasible, numerical quadrature (e.g., Gaussian or sparse grids) or regression-based approaches can be employed \cite{XIU1}. These methods are particularly efficient for a moderate number of uncertain parameters, which is a typical setting in epidemiological modeling.

\subsection{Application to the Controlled SIHR Model}

Consider the controlled SIHR model with optimal controls $\mathbf{u}^*(t) = (u_1^*(t), u_2^*(t), u_3^*(t))$. Define the quantity of interest as the cumulative number of infections over a period $[0,T]$:
\begin{equation}
Y = f(\mathbf{X}) = \int_0^T I(t; \mathbf{X}, \mathbf{u}^*(t))\, dt.
\end{equation}
The uncertain parameters are
\begin{equation}
\mathbf{X} = (\beta, \gamma, \alpha, \lambda),
\end{equation}
modeled as independent random variables with suitable distributions.

\subsubsection*{PCE Approximation}

The cumulative number $Y$ is approximated using a truncated PCE:
\begin{equation}
f(\mathbf{X}) \approx \sum_{\mathbf{k} \in \mathcal{K}} c_{\mathbf{k}} \Psi_{\mathbf{k}}(\mathbf{X}),
\end{equation}
where $\Psi_{\mathbf{k}}(\mathbf{X})$ are multivariate orthogonal polynomials and $c_{\mathbf{k}}$ are the expansion coefficients associated with multi-index $\mathbf{k} = (k_1, k_2, k_3, k_4)$.

\subsubsection*{Sobol Indices from PCE}

First-order Sobol indices are computed as
\begin{equation}
S_i = \frac{\sum_{\mathbf{k} \in \mathcal{K}_i^{(1)}} c_{\mathbf{k}}^2}{\sum_{\mathbf{k}\ne 0} c_{\mathbf{k}}^2},
\end{equation}
where $\mathcal{K}_i^{(1)}$ contains indices with only $k_i \ne 0$. Second-order indices are obtained analogously:
\begin{equation}
S_{ij} = \frac{\sum_{\mathbf{k} \in \mathcal{K}_{ij}^{(2)}} c_{\mathbf{k}}^2}{\sum_{\mathbf{k}\ne 0} c_{\mathbf{k}}^2}.
\end{equation}

\noindent This PCE-based approach provides a computationally efficient method to quantify parameter influence and interaction effects, offering valuable insights for epidemic control design \cite{marino2008,sobol2001,liban2023}.\\

\noindent Applying PCE and Sobol indices to the controlled SIHR model allows us to:
\begin{itemize}
    \item Identify the most influential parameters on cumulative infections, informing targeted interventions.
    \item Quantify interactions between parameters, highlighting synergistic or antagonistic effects.
    \item Reduce computational cost compared to traditional Monte Carlo-based variance decomposition, especially in high-dimensional stochastic models \cite{SUDRET}.
\end{itemize}

\noindent These insights are crucial for designing robust public health strategies that account for parameter uncertainty and optimize resource allocation in epidemic management \cite{djellout2023,souliban2024}.

%%%%%%%%%%%%%%%%%%%%%%%%%%%%%%
%%%%%%%%%%%%%%%%%%%%%%%%%%%%%

\section{Numerical Results} \label{results}

\noindent Analytical solutions of the COVID-19 transmission model are challenging to obtain due to the nonlinear interactions between susceptible, infected, hospitalized, and recovered populations. Therefore, numerical simulations are employed to characterize the temporal evolution of each compartment and to evaluate the impact of the optimal controls $u_1$ (preventive measures such as social distancing, mask usage, and vaccination campaigns) and $u_2$ (treatment and hospitalization interventions). Optimal control profiles are computed using the Forward-Backward Sweep method, and their effectiveness is analyzed across the different population categories. The simulations are performed using the parameter values and control bounds summarized in Table~\ref{tab:params}. The intervention horizon is set to $T = 120$ days with initial conditions $S(0)=0.98$, $I(0)=0.02$, $H(0)=0$, and $R(0)=0$.

\begin{table}[h]
\centering
\caption{Model parameters and control settings used in simulations.}
\label{tab:params}
\begin{tabular}{lll}
\hline
\textbf{Parameter} & \textbf{Description} & \textbf{Value} \\
\hline
$\tau$ & Recruitment rate & 0.03332 \\
$\mu$ & Natural mortality rate & 0.0444 \\
$\nu$ & Recovery rate from infection & 0.0833 \\
$\beta$ & Transmission rate & 1.1 \\
$\gamma$ & Natural recovery rate & 0.276 \\
$\alpha$ & Hospitalization rate & 0.0037 \\
$\lambda$ & Hospital recovery rate & 0.0026 \\
$u_{1,\max}$ & Maximum preventive effort & 0.5 \\
$u_{2,\max}$ & Maximum treatment effort & 0.5 \\
\hline
\end{tabular}
\end{table}

\begin{figure}[h!]
    \centering
    \includegraphics[width=16cm,height=6cm]{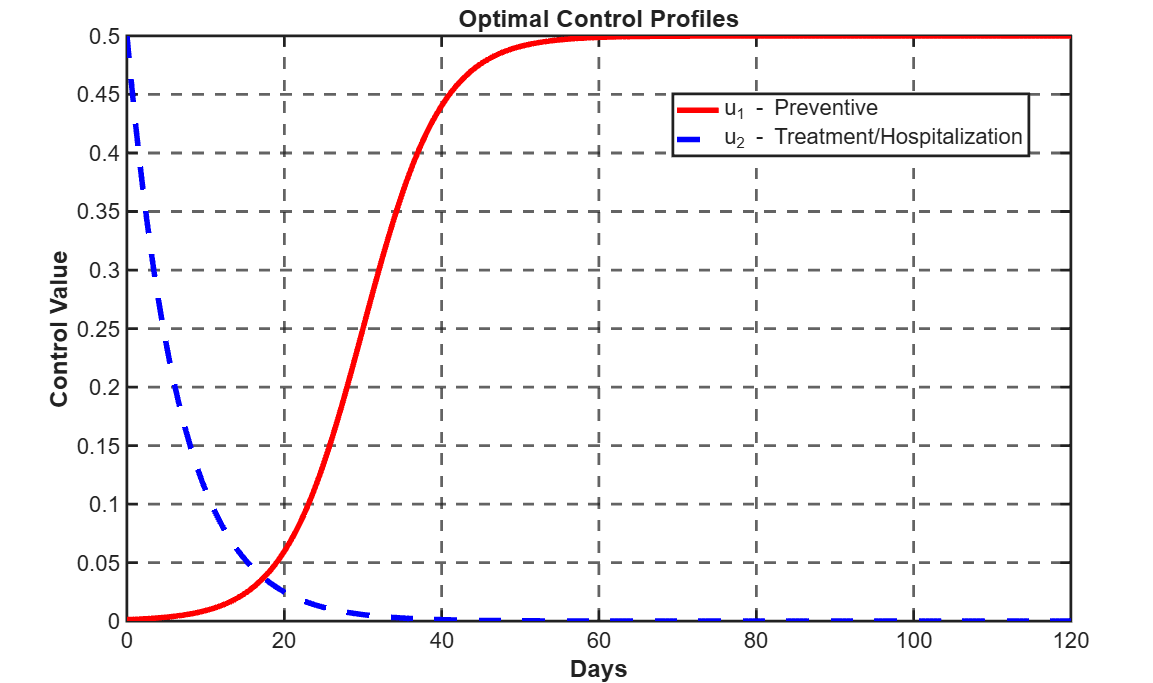} 
    \caption{Optimal control functions $u_1$ (preventive interventions such as social distancing, mask usage, and vaccination campaigns) and $u_2$ (treatment/hospitalization) over time. $u_1$ shows a delayed but sustained increase, stabilizing at a high level, whereas $u_2$ peaks early to curb the initial epidemic surge and then declines.}
    \label{Controls}
\end{figure}

\begin{figure}[h!]
    \centering
    \includegraphics[width=12cm,height=10cm]{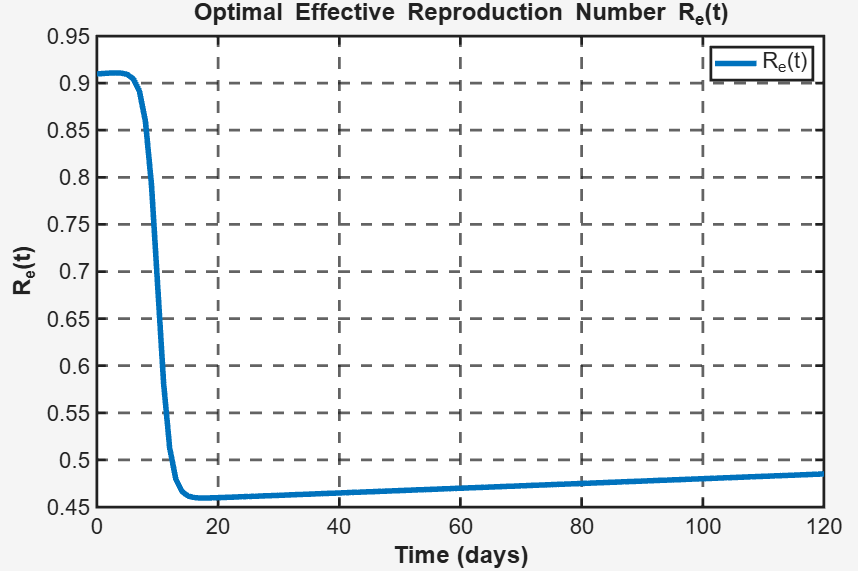} 
    \caption{Temporal evolution of the effective reproduction number $\mathcal{R}_e(t)$ under optimal control. The combination of preventive measures ($u_1$) and treatment/hospitalization ($u_2$) reduces $\mathcal{R}_e$ below the epidemic threshold of 1, indicating effective suppression of COVID-19 transmission. The curve highlights periods of maximum intervention impact.}
    \label{Re_evolution}
\end{figure}

\begin{figure}[h!]
    \centering
    \includegraphics[width=16cm,height=8cm]{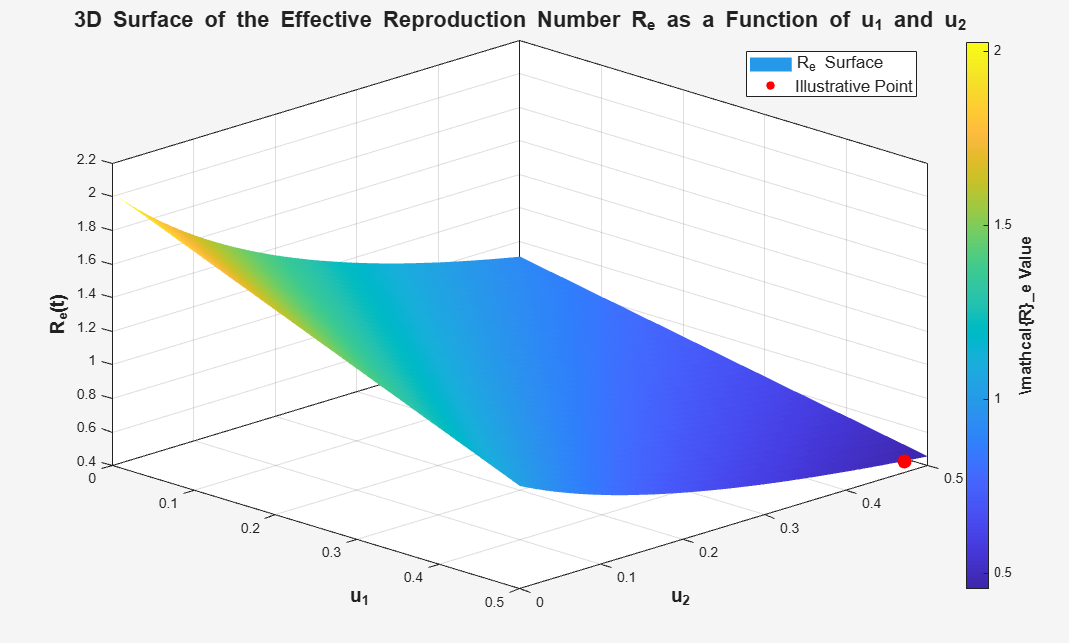} 
    \caption{3D surface of the effective reproduction number $\mathcal{R}_e$ as a function of preventive ($u_1$) and treatment/hospitalization ($u_2$) control intensities. The red point indicates the control values at the midpoint of the intervention period. This visualization demonstrates the nonlinear interaction between interventions and their joint effect on reducing $\mathcal{R}_e$.}
    \label{Re_3D}
\end{figure}

\noindent To evaluate the relative impact of each control strategy on COVID-19 dynamics, we consider three scenarios: (i) full intervention with both $u_1$ and $u_2$ active, (ii) partial intervention with only $u_1$ active, and (iii) partial intervention with only $u_2$ active. For each scenario, the temporal evolution of state variables $S(t)$, $I(t)$, $H(t)$, and $R(t)$ is analyzed together with Sobol sensitivity indices, which quantify the influence of epidemiological parameters over time. This integrated analysis provides insights into how interventions shape both the epidemic trajectories and the sensitivity of outcomes to model parameters. For the computation of Sobol sensitivity indices, four key epidemiological parameters ($\beta$, $\gamma$, $\alpha$, and $\lambda$)—are treated as uncertain variables. Each parameter is assumed to follow a uniform distribution within ±10\% of its nominal value, and Monte Carlo sampling is used to generate realizations for sensitivity analysis. \\

\noindent To illustrate the impact of different intervention strategies on epidemic dynamics, we present below the temporal evolution of the SIHR compartments under three scenarios: combined control (Figure~\ref{SIHRdynamics}), prevention-only control (Figure~\ref{fig:SIHR_u1_only}), and treatment-only control (Figure~\ref{fig:SIHR_u2_only}). These figures provide a comparative view of how preventive measures and medical interventions, individually or jointly, influence the progression of susceptible, infected, hospitalized, and recovered populations over time. Complementarily, the sensitivity of these outcomes to key epidemiological parameters is assessed through first- and second-order Sobol indices. The corresponding results are reported in Figures~\ref{sobol1all}–\ref{sobol2all} for the combined intervention case, Figures~\ref{sobol1u2NUL}–\ref{sobol2u2NUL} for the prevention-only scenario, and Figures~\ref{sobol1u1NUL}–\ref{sobol2u1NUL} for the treatment-only scenario. Together, these simulations and sensitivity analyses provide an integrated view of how interventions shape epidemic dynamics and the robustness of model predictions under parameter uncertainty.

\subsection{Epidemic Dynamics and Sensitivity under Combined Intervention ($u_1 \neq 0$, $u_2 \neq 0$)}

\noindent We first consider the scenario where both preventive efforts ($u_1 \neq 0$) and treatment/hospitalization measures ($u_2 \neq 0$) are implemented simultaneously. 
This setting reflects a combined intervention strategy in which public health policies aim to reduce transmission through preventive actions while ensuring rapid access to treatment for infected individuals. 
Such a dual approach provides a realistic framework for epidemic response, as it balances community-level prevention with clinical management of cases. 
The following results illustrate the impact of this combined strategy on the epidemic dynamics.\\

\begin{figure}[h!]
    \centering
    \includegraphics[width=16cm,height=10cm]{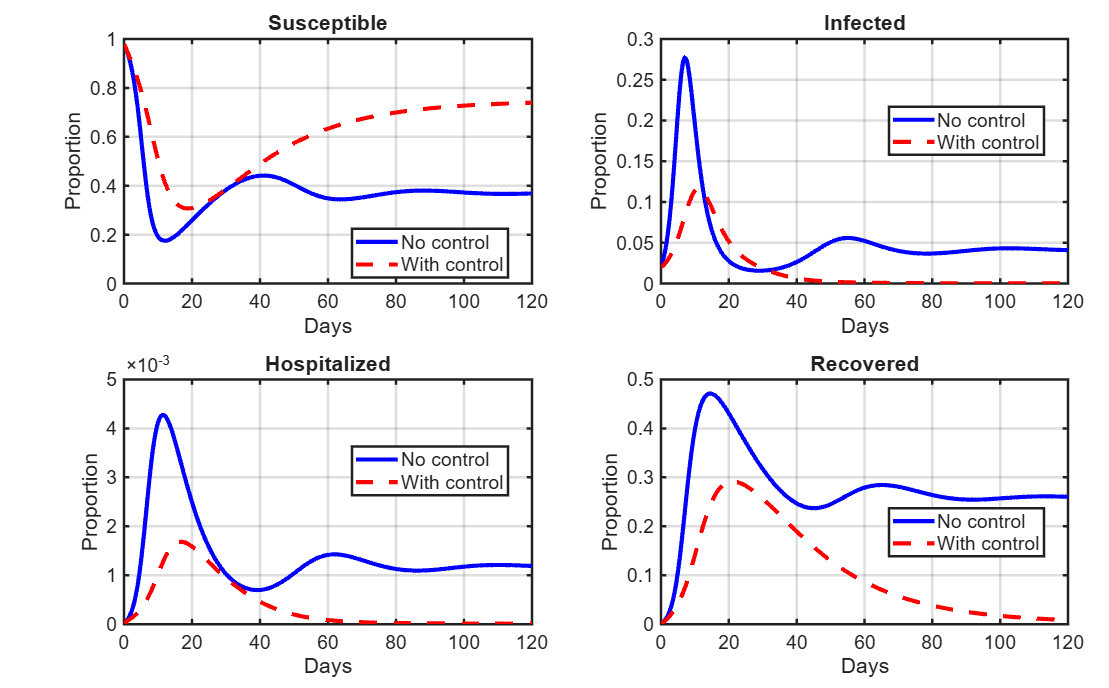} 
    \caption{Temporal dynamics of the SIHR compartments under uncontrolled and optimally controlled scenarios.}
    \label{SIHRdynamics}
\end{figure}

\noindent Figure~\ref{SIHRdynamics} compares the SIHR dynamics with and without optimal controls. Without interventions, infections and hospitalizations rise sharply, depleting the susceptible population and producing large epidemic peaks. 
With controls, these peaks are substantially reduced, the susceptible class remains higher, and the overall epidemic burden declines.  This demonstrates the effectiveness of optimal strategies in reducing both the magnitude and duration of outbreaks.\\

\noindent The figures \ref{sobol1all} and \ref{sobol2all} respectively present the first-order and second-order Sobol indices obtained via Polynomial Chaos Expansion (PCE) for all compartments with control $(S_c, I_c, H_c, R_c)$. In Figure \ref{sobol1all}, it is clearly observed that the transmission parameter $\beta$ dominates the dynamics of compartments $I_c$ and $H_c$, particularly at the beginning of the epidemic. This influence is explained by the direct role of $\beta$ in initiating and amplifying the infectious spread. Gradually, the importance of $\gamma$ (recovery rate) increases, especially after the epidemic peak, reflecting the transition from a growth phase to a resolution phase of the epidemic. As for the parameters $\alpha$ (hospitalization) and $\lambda$ (hospital discharge), their impact remains generally secondary but significant on the dynamics of compartment $H_c$, which confirms their specific role in managing hospital burden.\\

\noindent Figure \ref{sobol2all} highlights the second-order indices, revealing the interaction effects between parameters. It can be seen that interactions such as ($\beta$ and $\gamma$) or ($\beta$ and $\alpha$) remain generally less influential than direct contributions but become noticeable at key moments, particularly around the rapid growth phase and the infectious peak. These interactions suggest that epidemic dynamics, even under intervention, cannot be explained solely by the isolated effect of each parameter but also result from occasional synergies between them. Overall, these results confirm that transmission, through $\beta$, remains the major source of uncertainty, while the controls $u_1$ and $u_2$ mitigate the amplitude of sensitivities but do not fully eliminate them.

\begin{figure}[!h]
  \centering
  \includegraphics[width=14cm,height=10cm]{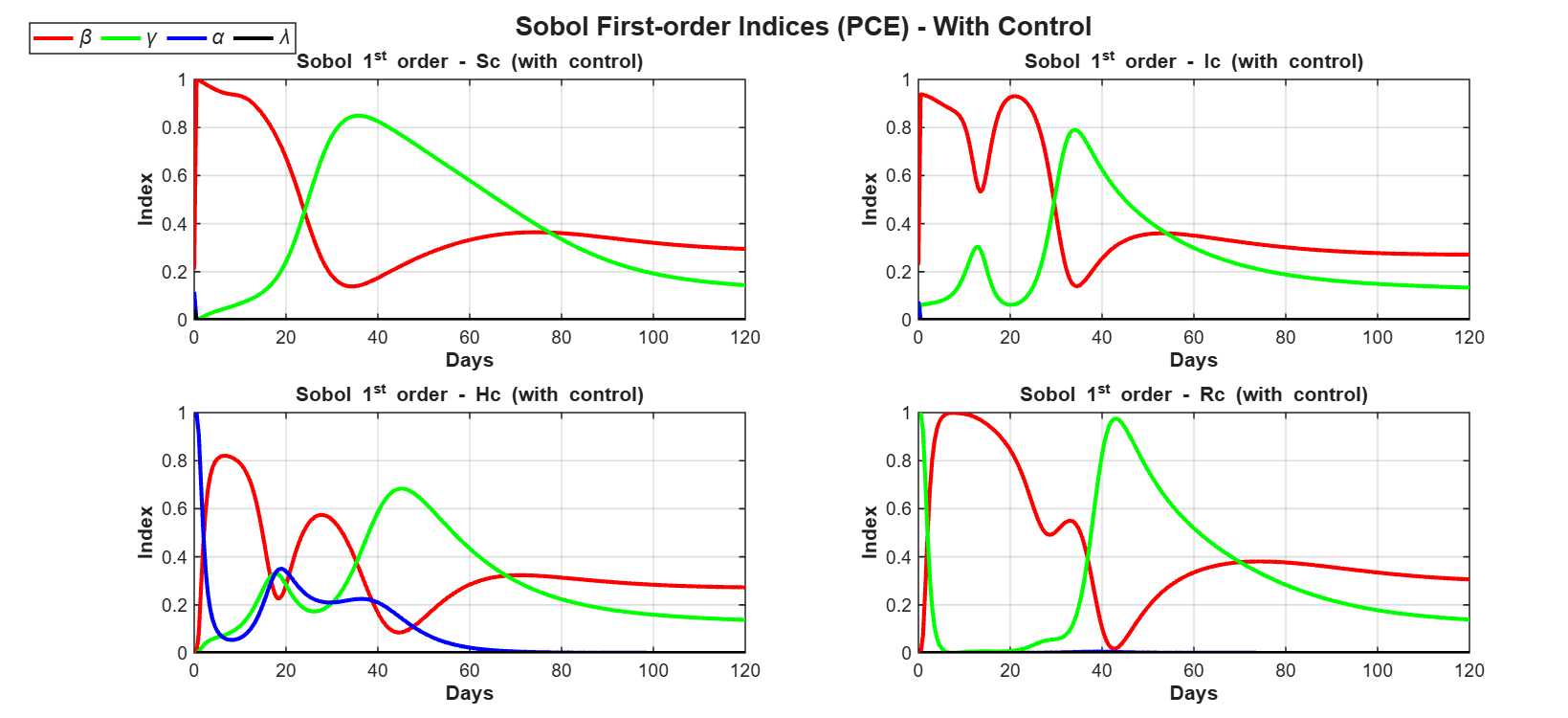}\\[0.5cm]

  \caption{\textit{First–order Sobol indices $S_i(t)$ (PCE) for all compartments with control $(S_c,I_c,H_c,R_c)$ under full
    intervention ($u_1\neq 0$, $u_2\neq 0$).}}
  \label{sobol1all}
\end{figure}

\begin{figure}[!h]
  \centering
 
  \includegraphics[width=14cm,height=10cm]{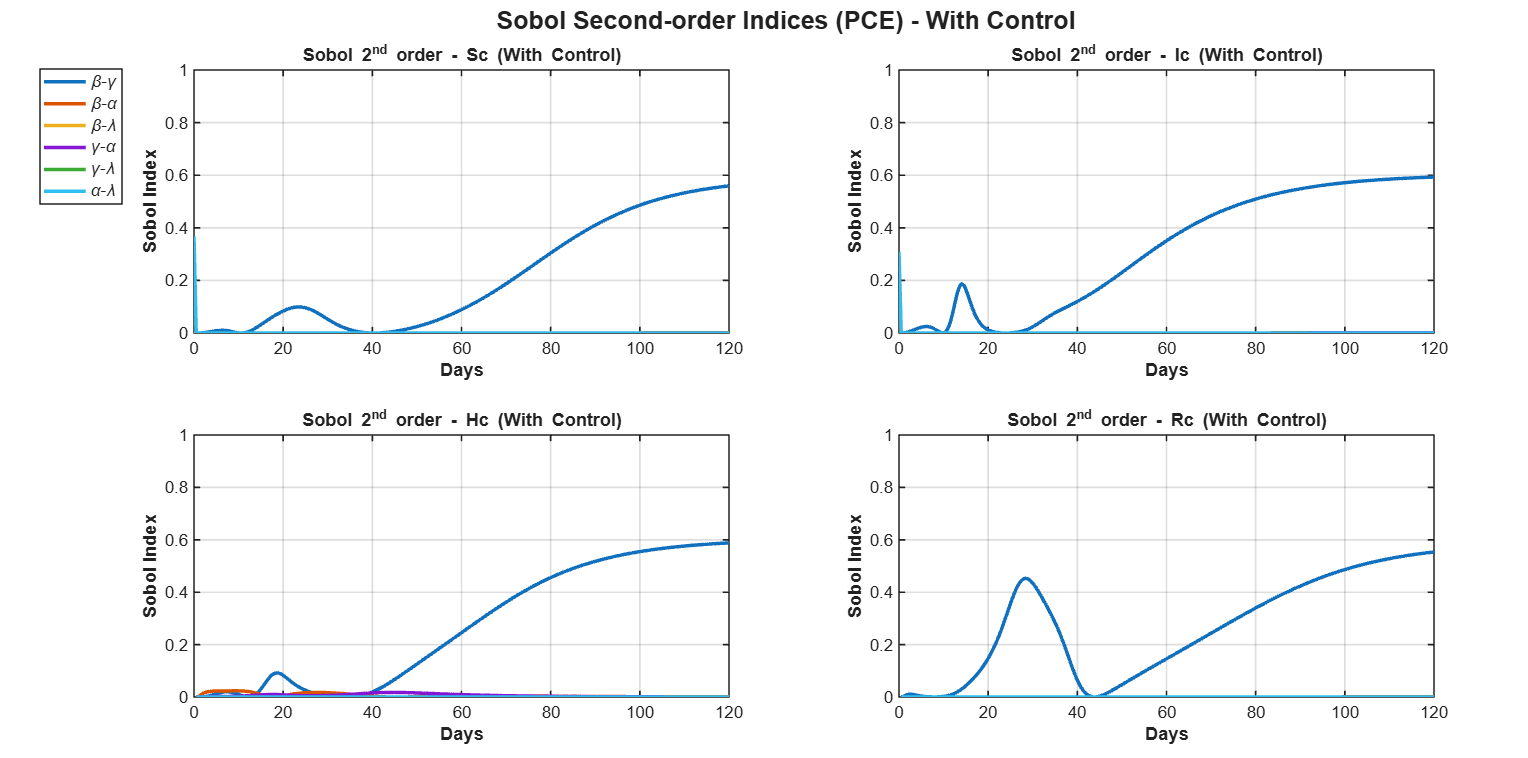}
  \caption{\textit{Second–order Sobol indices $S_i(t)$ (PCE) for all compartments with control $(S_c,I_c,H_c,R_c)$ under full
    intervention ($u_1\neq 0$, $u_2\neq 0$).}}
  \label{sobol2all}
\end{figure}

\newpage

\subsection{Epidemic Dynamics and Sensitivity under Prevention-Only Control ($u_1 \neq 0$, $u_2 = 0$)}

\noindent We now consider the scenario where only preventive efforts are activated ($u_1 \neq 0$), while treatment and hospitalization measures remain absent ($u_2 = 0$). 
This setting reflects a partial intervention strategy focused exclusively on reducing transmission through preventive actions such as social distancing, mask-wearing, or awareness campaigns. 
Unlike the combined strategy, no additional medical resources are mobilized to accelerate recovery or reduce hospitalization burden. 
The following results illustrate how this prevention-only approach shapes the epidemic dynamics.\\

\begin{figure}[h!]
    \centering
    \includegraphics[width=16cm,height=12cm]{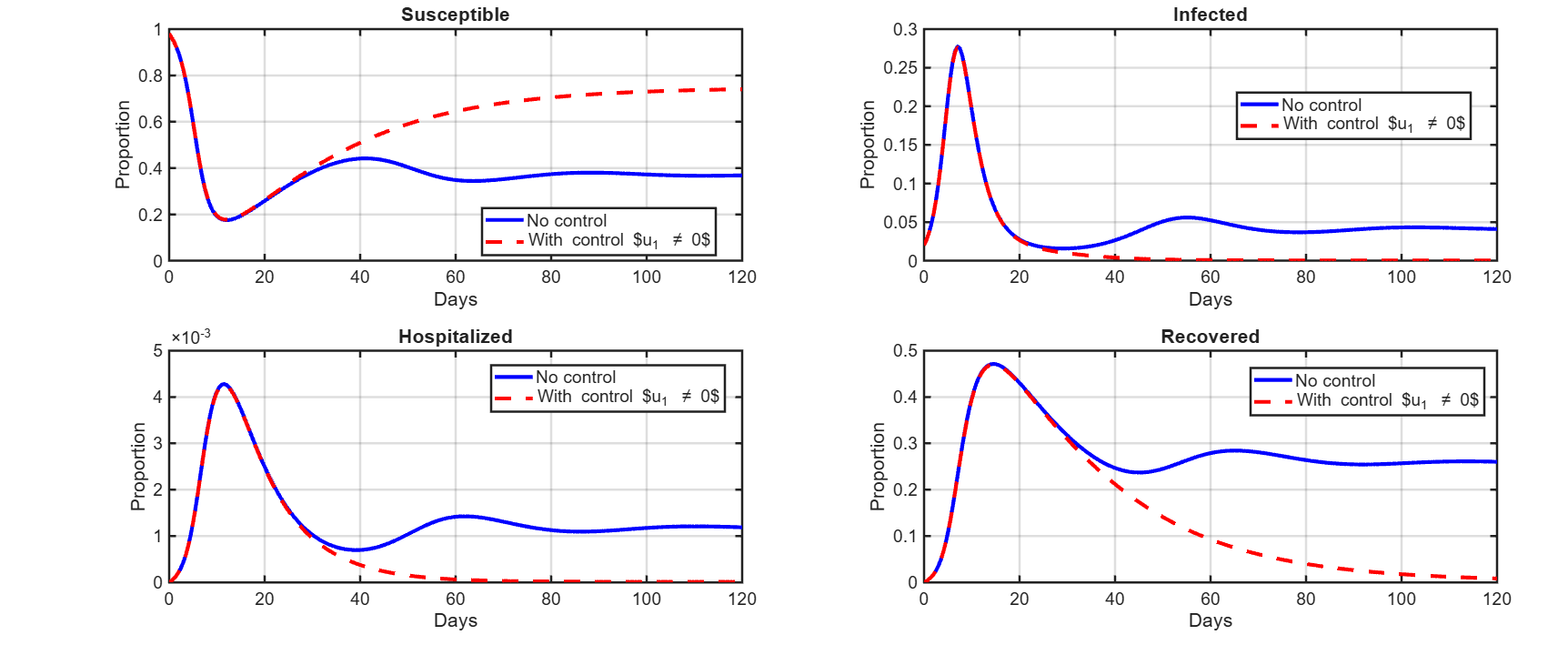} 
    \caption{Temporal dynamics of the SIHR compartments under partial intervention ($u_1 \neq 0$, $u_2 = 0$).}
    \label{fig:SIHR_u1_only}
\end{figure}

\noindent Figure~\ref{fig:SIHR_u1_only} compares the SIHR dynamics with and without the preventive control $u_1$. 
In the absence of intervention, infections and hospitalizations grow rapidly, leading to a sharp epidemic peak and a drastic reduction in the susceptible population. 
When prevention is applied, the transmission rate decreases, which significantly lowers the number of infected and hospitalized individuals. 
However, since treatment and hospitalization controls remain inactive ($u_2 = 0$), the epidemic peak is reduced but not fully suppressed, and the hospitalization burden remains considerable. 
This highlights that while prevention alone is effective in slowing down the spread and mitigating the intensity of the epidemic, complementary medical interventions are required to achieve comprehensive epidemic control.\\

\noindent The figures \ref{sobol1u2NUL} and \ref{sobol2u2NUL} respectively present the first-order and second-order Sobol indices obtained via Polynomial Chaos Expansion (PCE) for all compartments with control $(S_c, I_c, H_c, R_c)$ under a partial intervention strategy ($u_1 \neq 0$, $u_2 = 0$). In Figure \ref{sobol1u2NUL}, it is observed that the transmission parameter $\beta$ remains the dominant factor influencing the dynamics of $I_c$ and $H_c$, particularly in the early stages of the epidemic. This strong influence reflects the direct role of $\beta$ in driving new infections and sustaining epidemic growth. Over time, the importance of $\gamma$ (recovery rate) becomes more pronounced, especially after the epidemic peak, highlighting the progressive transition from epidemic expansion to recovery. The parameters $\alpha$ (hospitalization) and $\lambda$ (hospital discharge) exert a more moderate yet non-negligible effect, particularly on $H_c$, confirming their role in shaping hospital occupancy. Figure \ref{sobol2u2NUL} illustrates the second-order indices, which capture interaction effects between parameters. Interactions such as ($\beta$ and $\gamma$) or ($\beta$ and $\alpha$) remain less influential than first-order effects but become noticeable during critical phases, including the rapid growth period and near the peak of hospitalizations. These results suggest that even under prevention-only intervention, epidemic dynamics cannot be fully explained by the isolated contributions of parameters, but also depend on transient synergies among them. \\

\noindent Overall, the results confirm that transmission through $\beta$ remains the principal driver of uncertainty, while the preventive control $u_1$ mitigates the magnitude of sensitivities compared to the uncontrolled case. However, the absence of treatment control ($u_2 = 0$) implies that uncertainties associated with hospital burden persist, underlining the complementary role of medical interventions in achieving robust epidemic management. 

\begin{figure}[!h]
  \centering
  \includegraphics[width=14cm,height=10cm]{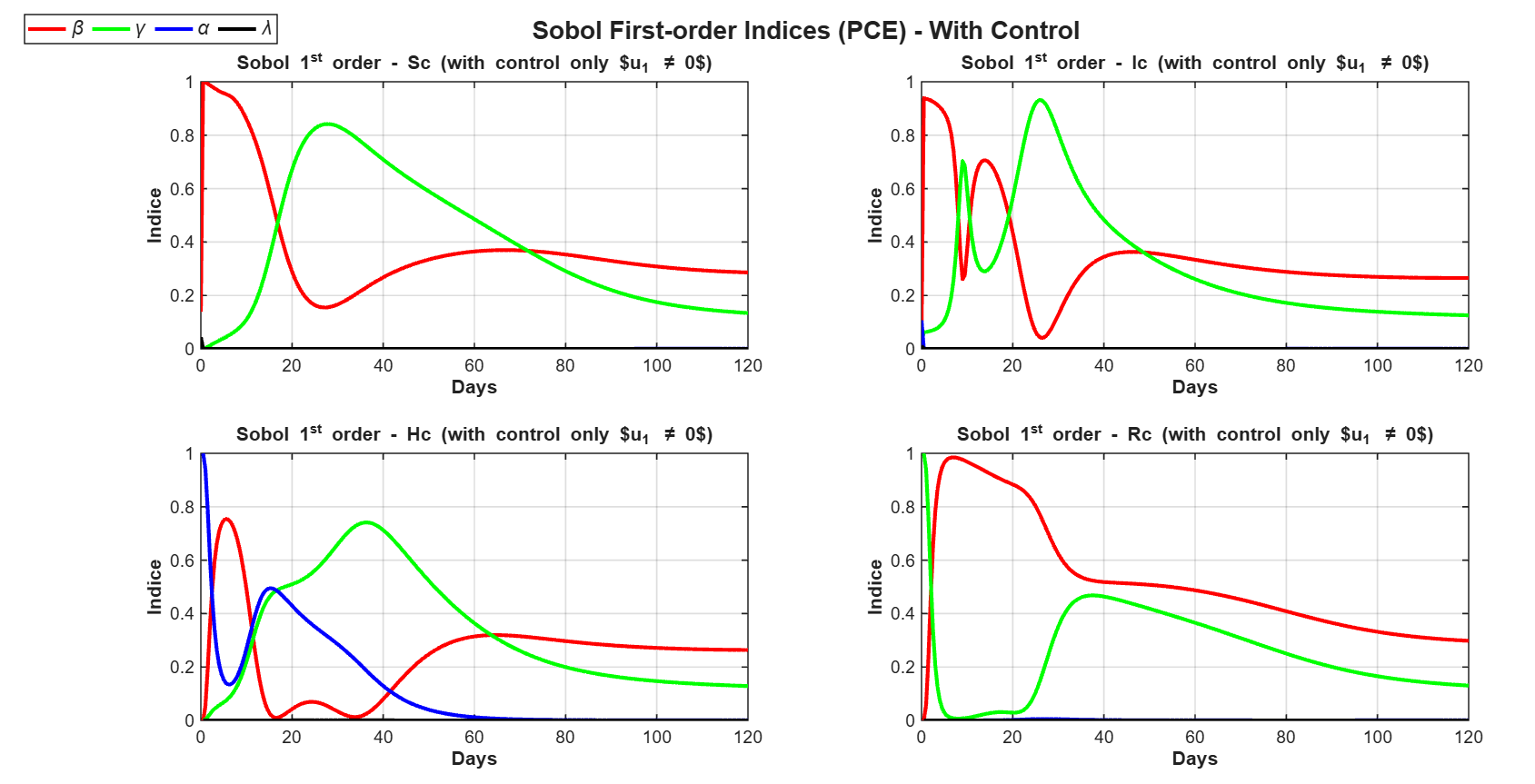}\\[0.5cm]
  \caption{\textit{First–order Sobol indices $S_i(t)$ (PCE) for all compartments with control $(S_c,I_c,H_c,R_c)$ under partial intervention ($u_1 \neq 0$, $u_2 = 0$).}}
  \label{sobol1u2NUL}
\end{figure}

\begin{figure}[!h]
  \centering
  \includegraphics[width=14cm,height=10cm]{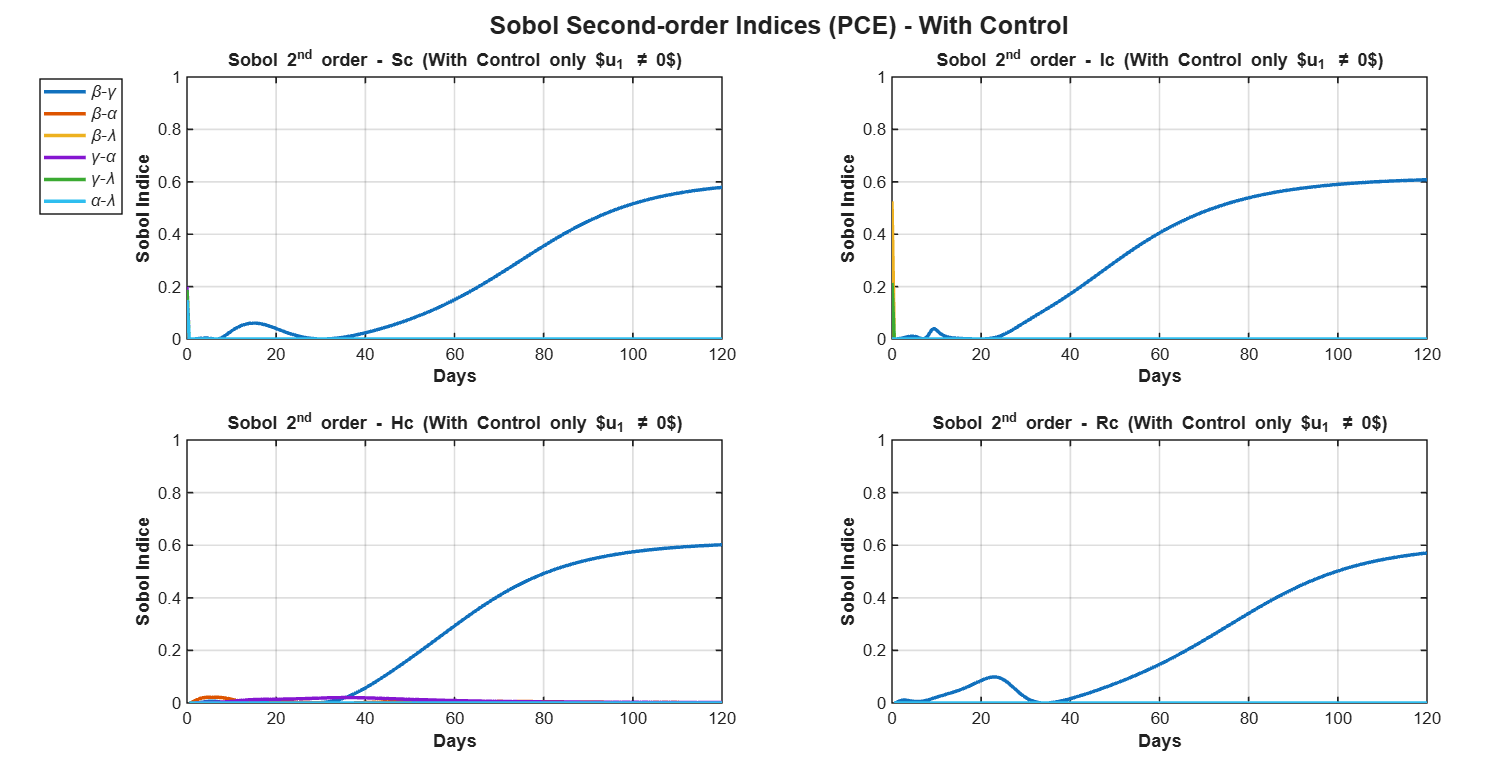}
  \caption{\textit{Second–order Sobol indices $S_i(t)$ (PCE) for all compartments with control $(S_c,I_c,H_c,R_c)$ under partial intervention ($u_1 \neq 0$, $u_2 = 0$).}}
  \label{sobol2u2NUL}
\end{figure}

\newpage

\subsection{Epidemic Dynamics and Sensitivity under Treatment-Only Control ($u_1 = 0$, $u_2 \neq 0$)}

\noindent We now turn to the scenario where only treatment and hospitalization measures are activated ($u_1 = 0$, $u_2 \neq 0$), while preventive efforts are absent. 
This setting reflects a partial intervention strategy focused exclusively on medical response, such as improving hospitalization capacity, treatment availability, or discharge acceleration. 
In contrast to the prevention-based approach, transmission remains unaffected, but the clinical burden of infections is mitigated by accelerating recovery and reducing hospital occupancy. 
The following results show how this treatment-only strategy impacts the epidemic dynamics.\\

\begin{figure}[h!]
    \centering
    \includegraphics[width=16cm,height=12cm]{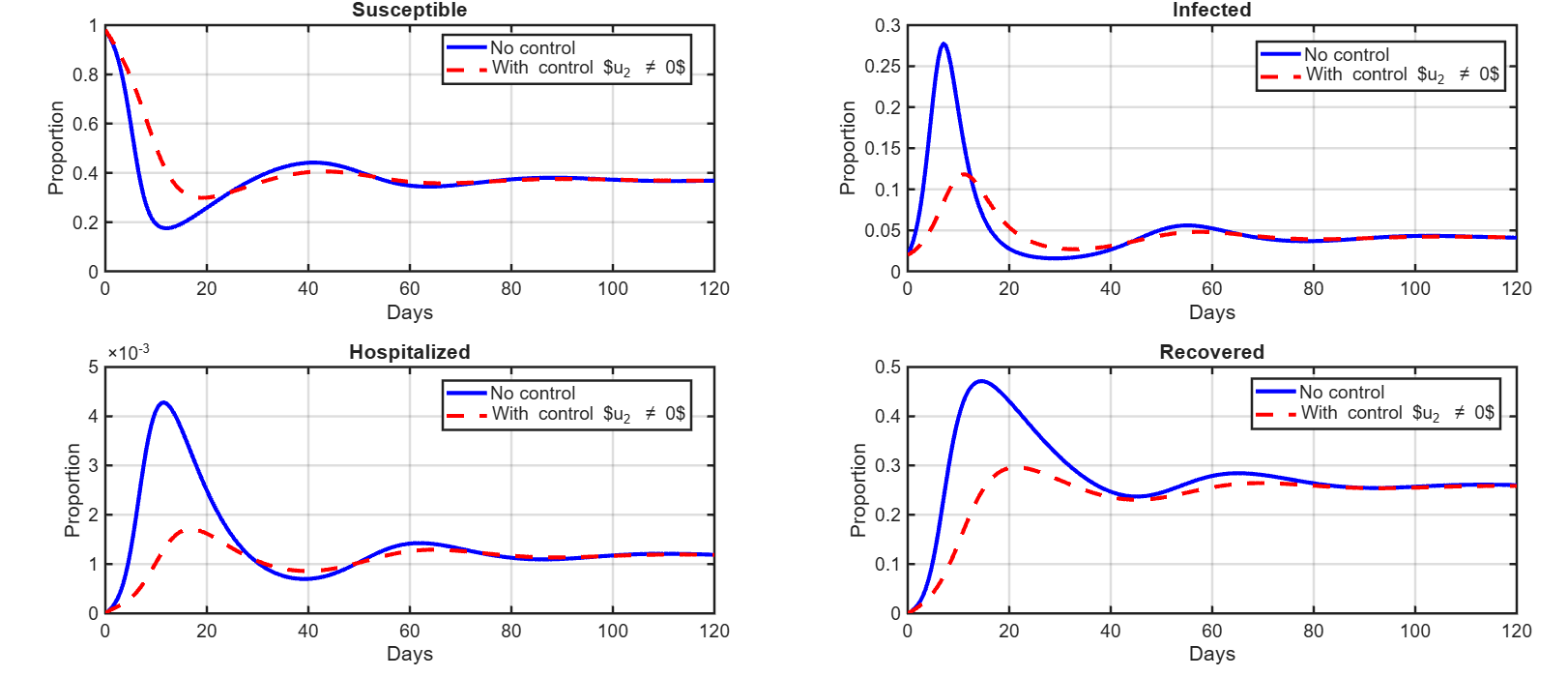} 
    \caption{Temporal dynamics of the SIHR compartments under partial intervention ($u_1 = 0$, $u_2 \neq 0$).}
    \label{fig:SIHR_u2_only}
\end{figure}

\noindent Figure~\ref{fig:SIHR_u2_only} compares the SIHR dynamics with and without the treatment control $u_2$. 
In the absence of intervention, infections and hospitalizations grow rapidly, leading to a sharp epidemic peak and high hospital burden. 
When treatment is applied, recovery rates increase and the duration of hospitalization is reduced, which alleviates pressure on the healthcare system. 
However, since preventive measures are not implemented ($u_1 = 0$), the transmission rate remains unchanged and the epidemic peak in infections still occurs. 
This highlights that medical interventions alone are effective in reducing hospitalization burden and increasing recovery, but they cannot prevent large-scale transmission without complementary preventive actions.\\

\noindent The figures \ref{sobol1u1NUL} and \ref{sobol2u1NUL} respectively present the first-order and second-order Sobol indices obtained via Polynomial Chaos Expansion (PCE) for all compartments with control $(S_c, I_c, H_c, R_c)$ under a partial intervention strategy ($u_1 = 0$, $u_2 \neq 0$). In Figure \ref{sobol1u1NUL}, it is observed that the transmission parameter $\beta$ remains the dominant factor influencing the dynamics of $I_c$ and $H_c$, especially in the initial stages of the epidemic. This reflects the persistence of high transmission in the absence of preventive control. Nevertheless, the treatment control $u_2$ indirectly amplifies the role of recovery-related parameters. In particular, $\gamma$ (recovery rate) and $\lambda$ (hospital discharge) gain importance after the epidemic peak, highlighting the contribution of medical interventions in accelerating patient turnover and reducing hospital occupancy. The parameter $\alpha$ (hospitalization) maintains a moderate but noticeable influence on $H_c$, in line with its direct effect on inflows to the hospitalized compartment. \\

\noindent Figure \ref{sobol2u1NUL} illustrates the second-order indices, which capture interaction effects between parameters. Interactions such as ($\beta$ and $\gamma$) or ($\beta$ and $\lambda$) become more apparent compared to the prevention-only case, reflecting the role of medical control $u_2$ in shaping synergies between transmission, recovery, and hospital discharge. These effects are most visible near the epidemic peak and during periods of high hospitalization load. \\

\noindent Overall, the results confirm that transmission through $\beta$ remains the main driver of uncertainty, but the activation of treatment control ($u_2 \neq 0$) reduces the relative dominance of $\beta$ by enhancing the influence of recovery and hospital-related parameters. This underlines the crucial role of medical interventions in mitigating hospital burden and balancing epidemic dynamics, even though the absence of preventive measures allows transmission to remain largely uncontrolled. 

\begin{figure}[!h]
  \centering
  \includegraphics[width=14cm,height=10cm]{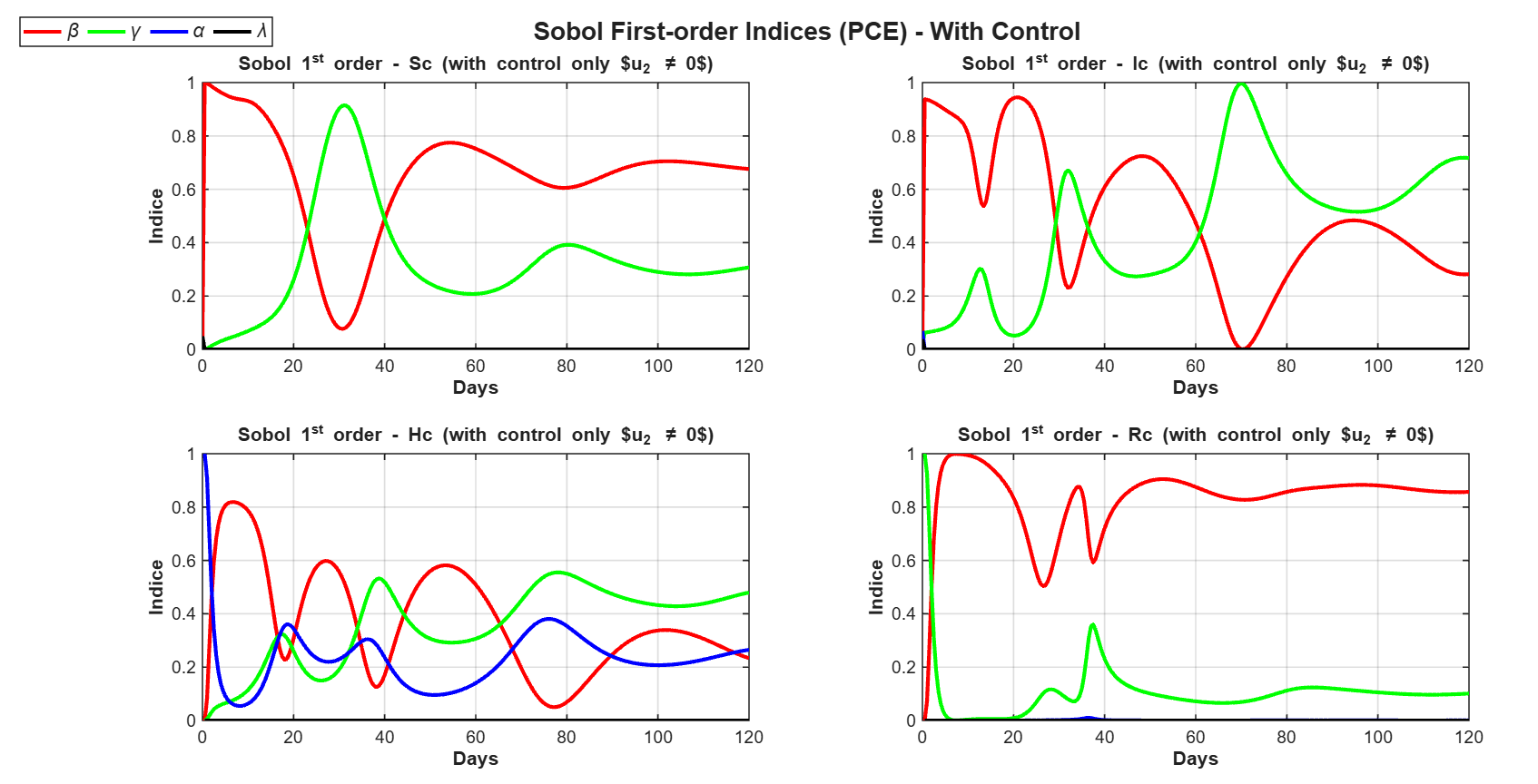}\\[0.5cm]
  \caption{\textit{First–order Sobol indices $S_i(t)$ (PCE) for all compartments with control $(S_c,I_c,H_c,R_c)$ under partial intervention ($u_1 = 0$, $u_2 \neq 0$).}}
  \label{sobol1u1NUL}
\end{figure}

\begin{figure}[!h]
  \centering
  \includegraphics[width=14cm,height=10cm]{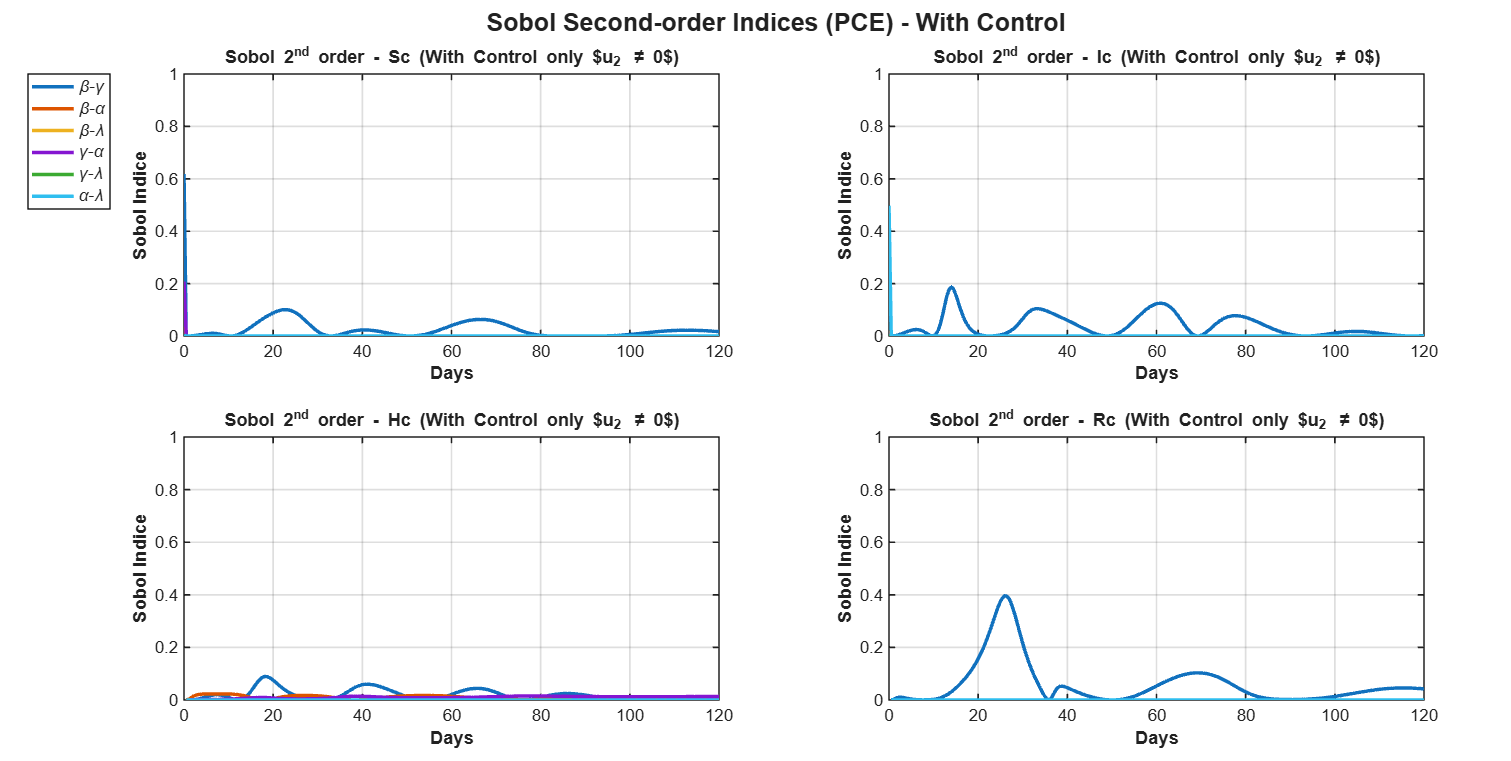}
  \caption{\textit{Second–order Sobol indices $S_i(t)$ (PCE) for all compartments with control $(S_c,I_c,H_c,R_c)$ under partial intervention ($u_1 = 0$, $u_2 \neq 0$).}}
  \label{sobol2u1NUL}
\end{figure}

\section{Conclusion} \label{conclusion}

\noindent This study proposed a unified mathematical framework that integrates optimal control with global sensitivity analysis (GSA) within an extended SIHR model. By introducing time-dependent control functions representing preventive measures ($u_1$) and treatment/hospitalization efforts ($u_2$), the framework provides a systematic methodology for exploring the interplay between intervention design and parameter uncertainty in dynamical systems. Numerical experiments calibrated to COVID-19 dynamics in Djibouti illustrate that combined control strategies generally outperform single interventions in reducing infection prevalence, hospitalization peaks, and overall epidemic burden, thus highlighting the benefits of integrated approaches in constrained settings.\\

\noindent The GSA based on Polynomial Chaos Expansion (PCE) and Sobol indices demonstrates that the transmission parameter $\beta$ dominates early outbreak dynamics, while recovery and hospitalization parameters ($\gamma, \alpha, \lambda$) become more influential in later stages. Interaction terms such as ($\beta$ and $\gamma$) and ($\beta$ and $\alpha$) further reveal that epidemic outcomes depend on both primary drivers and parameter synergies. These findings emphasize the importance of uncertainty quantification in guiding model calibration and intervention planning. \\

\noindent The analysis of constrained control profiles, incorporating adoption and decay rates, shows how mathematical modelling can approximate realistic intervention scenarios. Results indicate that early enhancement of treatment capacity, combined with sustained preventive measures, constitutes a robust strategy for epidemic mitigation under resource constraints. While the present work is exploratory and based on a specific case study, it demonstrates the potential of coupling optimal control with GSA as a general methodological tool for evaluating strategies in complex systems. \\

\noindent More broadly, this study contributes to the development of applied mathematical methods for control design under uncertainty. The proposed framework extends beyond the specific COVID-19 case and can be adapted to other infectious diseases and to co-infection contexts in low-resource environments. Future work should incorporate additional model complexity such as age structure, spatial heterogeneity, stochasticity, waning immunity, and behavioural or socio-economic feedbacks, as well as real-time data assimilation. Such extensions would further enhance the relevance and applicability of the approach for decision support in epidemiology and other applied domains. \\

\noindent In summary, this work offers a methodological contribution by integrating optimal control and global sensitivity analysis within a single framework. It provides a mathematically rigorous and computationally efficient approach for designing and evaluating control strategies under uncertainty, with potential applications well beyond the specific epidemic context examined here.

\section*{Declaration of Competing Interest}
The authors declare that they have no known competing financial interests or personal relationships that could have appeared to influence the work reported in this paper.

\section*{Acknowledgments}
This research was funded by the Laboratory of Analysis, Modeling, and Optimization (LAMO) at the University of Djibouti.

\nocite{*}
		
%\bibliographystyle{plain}   % ou unsrt, abbrv, etc.
%\bibliography{main}         % main.bib doit être présent
\printbibliography

\nocite{*}

\end{document}